\documentclass[useAMS,usenatbib]{mn2e}

\voffset=-0.6in
\usepackage{epsfig}
\usepackage{multirow}
\usepackage{color}

\def\gsim{\;\rlap{\lower 2.5pt
 \hbox{$\sim$}}\raise 1.5pt\hbox{$>$}\;}
\def\lsim{\;\rlap{\lower 2.5pt
   \hbox{$\sim$}}\raise 1.5pt\hbox{$<$}\;}
\def\ie{{\it i.e.}}
\def\eg{{\it e.g.}}

\def\vtl{\vec{x}^L}
\def\vts{\vec{x}^S}

\def\ima{{\mathrm i}}

\title[Cosmic Shears Should Not Be Measured In Conventional Ways]{Cosmic Shears Should Not Be Measured In Conventional Ways}

\author[Jun Zhang, Eiichiro Komatsu]
{Jun Zhang$^{1, 2}$\thanks{E-mail:jzhang@astro.as.utexas.edu}, Eiichiro Komatsu$^{1}$\\ 
\\
$^{1}$Texas Cosmology Center, University of Texas at Austin, Austin, TX 78712, USA\\
$^{2}$Department of Astronomy, University of California, Berkeley, CA 94720, USA\\
}

\newtheorem{theorem}{Theorem}[section]
\newtheorem{lemma}[theorem]{Lemma}

\begin{document}


\pagerange{\pageref{firstpage}--\pageref{lastpage}} \pubyear{2010}

\maketitle

\label{firstpage}
                                      
\begin{abstract}

A long standing problem in weak lensing is about how to construct cosmic
 shear estimators from galaxy images. Conventional methods average over a single quantity per galaxy to estimate each shear component. We show that any such shear estimators must reduce to a highly nonlinear form when the galaxy image is described by three parameters (pure ellipse), even in the absence of the point spread function (PSF). In the presence of the PSF, we argue that this class of shear estimators
 do not likely exist. Alternatively, we propose a new way of measuring
 the cosmic shear: instead of averaging over a single value from each galaxy, we average over two numbers, and then take the ratio to estimate the shear component. In particular, the two numbers correspond to the numerator and denominators which generate the quadrupole moments of the galaxy image in Fourier space, as proposed in Zhang (2008). This yields a statistically unbiased estimate of the shear component. Consequently, measurements of the n-point spatial
 correlations of the shear fields should also be modified: one needs to
 take the ratio of two correlation functions to get the desired, 
 unbiased shear correlation. 

\end{abstract}

\begin{keywords}
cosmology: gravitational lensing - methods: data analysis - techniques: image processing: large scale structure 
\end{keywords}

\section{Introduction}
\label{intro}
  
Weak gravitational lensing refers to the weak and systematic shape
distortions of background source images (galaxies, CMB, etc.)  by the
foreground inhomogeneous density distributions on cosmological
scales. Since this effect only involves gravity, it has been widely used
as a direct probe of matter density fluctuations of our Universe (see,
\eg, \citealt{hj08} for a recent review). 

The weak lensing effect can only be probed statistically due to the fact
that the intrinsic projected shape of each galaxy is always somewhat
random and anisotropic. A central theme in the study of 
weak lensing is to find unbiased cosmic shear estimators on galaxy
images. This is indeed very challenging because the shape distortion due
to weak lensing is generally much weaker than the intrinsic variations
of the galaxy shapes. 

In the early stage of this field, most of the work focused 
on issues regarding the use of the quadrupole moments of a
galaxy image as a shear estimator
(\citealt{tyson90,bonnet95,kaiser95,luppino97,hoekstra98,rhodes00,kaiser00}). Ever
since then, a number of other shear estimators have been considered in
the literature, including moments defined by a certain set of orthogonal
functions
(\citealt{bridle01,bernstein02,refregierbacon03,massey05,nakajima07}),
the spatial derivatives of the galaxy surface brightness field
(\citealt{zhang08,zhang10a}), etc.. 

Conventionally, for each shear component, the shear estimator is simply
one number derived from a galaxy image, whose statistical mean is
supposed to be equal to the true shear value, provided
that the intrinsic galaxy image is statistically
isotropic. Unfortunately, even in the absence of the PSF, we show that
such shear estimators at least do not exist in simple forms, making them
hard to use in practice for precise shear measurements (\eg, in the presence of noise). In the
presence of the PSF, we argue that such shear estimators do not likely
exist. We give reasons for the above statements in
\S\ref{not_exist}. (The readers who are just interested in
our new way of measuring shears may skip this section.)

In \S\ref{alternatives}, we present a new form of shear
estimators: instead of having only one number from each galaxy image for
each shear component, one can keep {\it two} numbers, and use the ratio
of their averages over many galaxies to accurately measure the cosmic
shear. We find that this new way of measuring the shear can be easily 
implemented by using the method of Zhang (2008) (Z08
hereafter). The new type of shear estimator requires weak lensing statistics such as the n-point correlation functions of the shear field to be carried out in a slightly unusual way, but with little additional cost. This is discussed in \S\ref{statistics}. In \S\ref{examples}, we give numerical examples. Finally, we summarize in \S\ref{summary}.

\section{Conventional Shear Estimators}
\label{not_exist}

Conventional shear estimators are defined as a class of shear estimators, which average over a single quantity per galaxy to estimate each shear component. Most of the existing shear measurement methods belong to this class. For example, the method by Kaiser et al. (1995) and its extensions basically use the quadrupole moments of each Gaussian-Profile-Weighted galaxy image to measure the shear components; in the shapelets method (\citealt{refregier03}), the value of each shear component is estimated from best-fitting a shapelets model to each observed galaxy image; the method of Bernstein and Jarvis (2002) evaluate the shear components by fitting each galaxy shape (also the PSF) with a series of orthogonal 2D Gaussian-based functions (see, \eg, \citealt{massey07} for more examples). A common feature of these methods is that they all generate one quantity per galaxy for each shear component.

In \S\ref{classic}, we start our discussion with the well-known 
examples of shear estimators consisting of quadrupole
moments of galaxy images, and show what the issues are. 
In \S\ref{no_PSF}, we show that, even in the absence of the PSF, any
conventional shear estimator has to reduce to a {\it highly} nonlinear
form, making it hard to use in practice. In \S\ref{prove}, we provide arguments as to why we think that
conventional shear estimators do not likely exist when a PSF is
present.

\subsection{A Review of the Problem}
\label{classic}

The use of galaxy quadrupole moments as shear estimators has been a
central topic in weak lensing for many years. It is therefore easier to
start our discussion with the quadrupole moments. To present
the issues clearly, let us first consider the case without the PSF or any photon noise. For convenience, we use $(x_1, x_2)$ or $(x, y)$ instead of $(\theta_x, \theta_y)$ for coordinates in 2D in this paper.

Suppose that the surface brightness field of the lensed galaxy image is  $f_L(\vtl)$ on the image plane, and that of the original (pre-lensing)  galaxy image is $f_S(\vts)$ on the source plane, where $\vtl$ and $\vts$ are the position angles on the image and source planes, respectively. We have the following relations:
\begin{eqnarray}
\label{fifstits} 
&&f_L(\vtl)=f_S(\vts)\\ \nonumber
&&\vtl=\mathbf{A}\vts
\end{eqnarray}
where $\mathbf{A}_{ij}=\delta_{ij}+\Phi_{ij}$, and $\Phi_{ij}=\partial x^L_i/\partial x^S_j-\delta_{ij}$, which are the spatial derivatives of the lensing deflection angle. $\Phi_{ij}$ can also be written as $\partial_{x_i}\partial_{x_j}\Phi$, where $\Phi$ is sometimes called the lensing potential. Matrix $\mathbf{A}$ can be alternatively written in terms of the convergence $\kappa=(\Phi_{11}+\Phi_{22})/2$ and the two shear components $\gamma_1=(\Phi_{11}-\Phi_{22})/2$ and $\gamma_2=\Phi_{12}$. 

The quadrupole moments of the lensed galaxy image are defined as follows:
\begin{equation}
\label{q_moments}
Q_{ij}=\int d^2\vec{x}x_ix_jf_L(\vec{x})
\end{equation}
where the origin of the coordinates has been chosen to be the center of the light, \ie, 
\begin{equation}
\label{origin}
\int d^2\vec{x}\vec{x}f_L(\vec{x})=0
\end{equation}
Let us also define the ellipticities of the image as:
\begin{eqnarray}
\label{epsilon}
\epsilon_1&=&\frac{Q_{11}-Q_{22}}{Q_{11}+Q_{22}}\\ \nonumber
\epsilon_2&=&\frac{2Q_{12}}{Q_{11}+Q_{22}}\\ \nonumber
\end{eqnarray}

In the absence of the PSF, the quantities $\epsilon_1$ and $\epsilon_2$
are often thought to be good estimators for $\gamma_1$ and $\gamma_2$ up
to the first order in the shear. Let us find out if they are indeed unbiased shear estimators. 
The observed quadrupole can be rewritten from using eq.(\ref{fifstits}) in eq.(\ref{q_moments}):
\begin{eqnarray}
\label{q2}
Q_{ij}&=&\int d^2\vec{x}x_ix_jf_S(\mathbf{A}^{-1}\vec{x})\\ \nonumber
&=&\vert \mathrm{det} (\mathbf{A})\vert\int d^2\vec{x}(\mathbf{A}\vec{x})_i(\mathbf{A}\vec{x})_jf_S(\vec{x})
\end{eqnarray}
Note that the last step of the above equation is achieved by redefining $\mathbf{A}^{-1}\vec{x}$ as $\vec{x}$. 
Keeping up to first order in $\kappa$, $\gamma_1$, and $\gamma_2$ in eq.(\ref{q2}), we get:
\begin{eqnarray}
\label{q3}
Q_{11}-Q_{22}&=&(1+4\kappa)(Q_{11}^S-Q_{22}^S)+2\gamma_1(Q_{11}^S+Q_{22}^S)\\ \nonumber
Q_{12}&=&(1+4\kappa)Q_{12}^S+\gamma_2(Q_{11}^S+Q_{22}^S)\\ \nonumber
Q_{11}+Q_{22}&=&(1+4\kappa)(Q_{11}^S+Q_{22}^S)+2\gamma_1(Q_{11}^S-Q_{22}^S)\\ \nonumber
&+&4\gamma_2Q_{12}^S
\end{eqnarray}
where $Q_{ij}^S$ are the quadrupole moments of the original galaxy image defined as:
\begin{equation}
\label{q_I_moments}
Q_{ij}^S=\int d^2\vec{x}x_ix_jf_S(\vec{x})
\end{equation}
Note that the two light centers defined in the image and source planes coincide. Based on eq.(\ref{q3}), we find:
\begin{eqnarray}
\label{q4}
\epsilon_1&=&\frac{\epsilon_1^S+2\gamma_1}{1+2\gamma_1\epsilon_1^S+2\gamma_2\epsilon_2^S}\\ \nonumber
&=&\epsilon_1^S+2\gamma_1\left[1-\left(\epsilon_1^S\right)^2\right]-2\gamma_2\epsilon_1^S\epsilon_2^S\\ \nonumber
\epsilon_2&=&\frac{\epsilon_2^S+2\gamma_2}{1+2\gamma_1\epsilon_1^S+2\gamma_2\epsilon_2^S}\\ \nonumber
&=&\epsilon_2^S+2\gamma_2\left[1-\left(\epsilon_2^S\right)^2\right]-2\gamma_1\epsilon_1^S\epsilon_2^S
\end{eqnarray}
where 
\begin{eqnarray}
\label{epsilon_S}
\epsilon_1^S&=&\frac{Q_{11}^S-Q_{22}^S}{Q_{11}^S+Q_{22}^S}\\ \nonumber
\epsilon_2^S&=&\frac{2Q_{12}^S}{Q_{11}^S+Q_{22}^S}
\end{eqnarray}
Given that the surface brightness distribution of the original galaxy
image is statistically isotropic, we have
$\langle\epsilon_{1,2}^S\rangle=0$ and
$\langle\epsilon_1^S\epsilon_2^S\rangle=0$. Therefore, we find
\begin{eqnarray}
\label{q5}
\langle\epsilon_1\rangle&=&2\gamma_1\left[1-\langle\left(\epsilon_1^S\right)^2\rangle\right]\\ \nonumber
\langle\epsilon_2\rangle&=&2\gamma_2\left[1-\langle\left(\epsilon_2^S\right)^2\rangle\right]
\end{eqnarray}
This result, Eq.(\ref{q5}), clearly shows that $\epsilon_1$ and $\epsilon_2$ are
{\it not} unbiased shear estimators, as
$\langle\left(\epsilon_1^S\right)^2\rangle$ and
$\langle\left(\epsilon_2^S\right)^2\rangle$ in the multiplicative
factors depend on the galaxy morphology distribution, and cannot be
reduced to constant factors. (Also see Eq.(3.29) of Bernstein \& Jarvis (2002), or Eq.(9.5.26) of
Weinberg (2008).)

One can construct an unbiased estimator of the shear, 
if one keeps three quantities from each lensed galaxy image:
$Q_{11}-Q_{22}$, $2Q_{12}$, and $Q_{11}+Q_{22}$, and use the ratios of
their averages. Assuming statistical isotropy of intrinsic galaxy shapes in eq.(\ref{q3}), and keeping up to first order in shear/convergence, we have 
(also see Eq.(9.5.30) of Weinberg (2008)):
\begin{eqnarray}
\label{q6}
\frac{1}{2}\frac{\langle Q_{11}-Q_{22}\rangle}{\langle Q_{11}+Q_{22}\rangle}&=&\gamma_1\\ \nonumber
\frac{\langle Q_{12}\rangle}{\langle Q_{11}+Q_{22}\rangle}&=&\gamma_2
\end{eqnarray}
This form of shear estimators is not conventional, as one has to keep
more than one quantities from each galaxy image for each shear
component. It is this class of estimators we shall discuss
in this paper in detail.

One may wonder whether unbiased shear estimators in the conventional
form ever exist. The answer is yes, at least when the PSF is 
absent. For example, we find the following unbiased shear estimators:
\begin{eqnarray}
\label{q7}
\frac{1}{4}\left\langle \ln\left(\frac{1+\epsilon_1}{1-\epsilon_1}\right)\right\rangle&=&\gamma_1\\ \nonumber
\frac{1}{4}\left\langle \ln\left(\frac{1+\epsilon_2}{1-\epsilon_2}\right)\right\rangle&=&\gamma_2\\ \nonumber
\end{eqnarray}
Eq.(\ref{q7}) can be checked by applying Taylor expansion of $\ln[(1+\epsilon_i)/(1-\epsilon_i)]$ ($i=1, 2$) to the first order in shear/convergence using eq.(\ref{q4}).
Eq.(\ref{q7}) defines a special type of conventional shear estimators
that are accidentally found by us. It is now immediately interesting to
ask if there exist other types of unbiased shear estimators in the
conventional form. We study this issue specifically in the next two
sections. If the readers wish to go directly to the
relevant sections on the new estimator, read on from \S\ref{alternatives}.

For notational convenience, we shall abbreviate ``conventional shear
estimator'' as ``CSE'' in the rest of the paper. Once again,
by CSE we mean the shear estimators that are made of just one number measured
from a galaxy image for each shear component.

\subsection{CSE in the Absence of the PSF}
\label{no_PSF}

In preparation for our main theme of this section, we discuss the spin properties of cosmic shears and their estimators in \S\ref{coor_spin}. We then study the forms of CSEs in the absence of the PSF in \S\ref{spin_2_esti}.

\subsubsection{The Spin of the CSE}
\label{coor_spin}

To study the forms of the CSEs, it is useful to first consider their properties under coordinate rotations. Suppose we rotate the coordinates $(x, y)$ clockwise by an angle $\theta$. The new coordinates $(x^{\theta}, y^{\theta})$ are related to the old one via the following relation:
\begin{eqnarray}
\label{coor_trans}
&&x^{\theta}=x\cos\theta-y\sin\theta \\ \nonumber
&&y^{\theta}=x\sin\theta+y\cos\theta
\end{eqnarray}
If we write the position vector as a complex number of the form $x+\ima y$, where $\ima$ is the complex unit, the coordinate transformation under rotation can then be written as:
\begin{equation}
\label{coor_trans_complex}
x^{\theta}+\ima y^{\theta} =(x+\ima y)\exp(\ima \theta)
\end{equation} 
For notational brevity, we shall generally use $X^{\theta}$ to denote the value of any quantity $X$ in the new coordinates that are rotated clockwise by an angle $\theta$ with respect to the original coordinates. 

Let us now discuss how cosmic shears and their CSEs transform under
coordinate rotation. The definitions of the shear components (shown in
the beginning of \S\ref{classic}) involve spatial derivatives; 
thus, their transformation rules under coordinate rotation
can be found from the chain rule:
\begin{eqnarray}
\label{chain_rule}
&&\frac{\partial}{\partial x}=\frac{\partial x^{\theta}}{\partial x}\frac{\partial}{\partial x^{\theta}}+\frac{\partial y^{\theta}}{\partial x}\frac{\partial}{\partial y^{\theta}}\\ \nonumber
&&\frac{\partial}{\partial y}=\frac{\partial x^{\theta}}{\partial y}\frac{\partial}{\partial x^{\theta}}+\frac{\partial y^{\theta}}{\partial y}\frac{\partial}{\partial y^{\theta}}
\end{eqnarray}
From eq.(\ref{coor_trans}), we get:
\begin{eqnarray}
\label{cooefi_trans_derivs}
&&\frac{\partial x^{\theta}}{\partial x}=\frac{\partial y^{\theta}}{\partial y}=\cos\theta\\ \nonumber
&&\frac{\partial y^{\theta}}{\partial x}=-\frac{\partial x^{\theta}}{\partial y}=\sin\theta 
\end{eqnarray}
Therefore, we have:
\begin{equation}
\label{coor_trans_complex_2}
\frac{\partial}{\partial x^{\theta}}+\ima\frac{\partial}{\partial y^{\theta}}=\left(\frac{\partial}{\partial x}+\ima\frac{\partial}{\partial y}\right)\exp(\ima\theta)
\end{equation}
Taking the square of eq.(\ref{coor_trans_complex_2}), we find:
\begin{eqnarray}
\label{coor_trans_complex_3}
&&\left(\frac{\partial^2}{\partial (x^{\theta})^2}-\frac{\partial^2}{\partial (y^{\theta})^2}\right)+\ima\left(2\frac{\partial^2}{\partial x^{\theta}\partial y^{\theta}}\right)\\ \nonumber
&=&\left[\left(\frac{\partial^2}{\partial x^2}-\frac{\partial^2}{\partial y^2}\right)+\ima\left(2\frac{\partial^2}{\partial x\partial y}\right)\right]\exp(\ima 2\theta)
\end{eqnarray}
Therefore, the shear components, which are 2nd order derivatives of the lensing potential $\Phi$, also transform under coordinate rotation as:
\begin{equation}
\label{gamma_rot}
\gamma_1^{\theta}+\ima\gamma_2^{\theta}=\left(\gamma_1+\ima\gamma_2\right)\exp(\ima 2\theta)
\end{equation}
Because of this property, we usually say that cosmic shears form a spin-2 quantity. In general, a complex quantity, say $\Pi$, is called a spin-$n$ quantity if it transforms as $\Pi^{\theta}=\Pi\exp(\ima n\theta)$ under a clockwise coordinate rotation of angle $\theta$.

Now let us discuss shear estimators. It is straightforward to see that the CSEs defined in eq.(\ref{q7}) do not form a spin-2 quantity. More generally, assuming that $\Gamma_1$ and $\Gamma_2$ are the CSEs for $\gamma_1$ and $\gamma_2$, respectively, then, unlike $\gamma_1+\ima\gamma_2$, $\Gamma_1+\ima\Gamma_2$ is not necessarily a spin-2 quantity.  However, it turns out that we can regularize any CSEs by turning them into components of a spin-2 quantity. We have the following lemma:

\begin{lemma}
\label{turn_spin_2}
Based on any pair of CSEs $(\Gamma_1, \Gamma_2)$, one can build a new pair of CSEs $(\Gamma_1', \Gamma_2')$ to form a spin-2 quantity through the following procedure:
\begin{equation}
\label{trans_spin_2}
\Gamma_1'+\ima\Gamma_2'=\frac{1}{2\pi}\int_0^{2\pi}d\theta\exp(-\ima 2\theta)\left(\Gamma_1^{\theta}+\ima\Gamma_2^{\theta}\right)
\end{equation}  
\end{lemma}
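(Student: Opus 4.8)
The plan is to verify two properties of the pair $(\Gamma_1',\Gamma_2')$ defined by eq.(\ref{trans_spin_2}): (i) it is again a CSE, \ie\ each component is a single number extracted from one galaxy image and the pair is still an unbiased estimator of $(\gamma_1,\gamma_2)$ to first order; and (ii) $\Gamma_1'+\ima\Gamma_2'$ transforms as a spin-2 quantity under a clockwise coordinate rotation. Property (i) guarantees that the construction stays inside the class under study, while (ii) is the substance of the lemma, which I would obtain by a direct change of variables in the defining integral.

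For (i), the integrand $\exp(-\ima 2\theta)(\Gamma_1^{\theta}+\ima\Gamma_2^{\theta})$ is built entirely from the one galaxy image --- one simply re-evaluates the given CSE in the family of coordinate frames rotated clockwise by $\theta$ and integrates over $\theta$ --- so the output $\Gamma_1'+\ima\Gamma_2'$ is a single complex number, \ie\ one real number per shear component; hence $(\Gamma_1',\Gamma_2')$ has the CSE form. For unbiasedness I would use that statistical isotropy of the intrinsic galaxy shapes is itself invariant under rotations: since $(\Gamma_1,\Gamma_2)$ is unbiased in the original frame for the isotropic source ensemble, it is unbiased in every rotated frame as well, and because the true shear is spin-2 (eq.(\ref{gamma_rot})) while the convergence is rotation-invariant, $\langle\Gamma_1^{\theta}+\ima\Gamma_2^{\theta}\rangle=(\gamma_1+\ima\gamma_2)\exp(\ima 2\theta)$ to first order. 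Inserting this into eq.(\ref{trans_spin_2}) makes the factors $\exp(\pm\ima 2\theta)$ cancel, and the remaining $\theta$-integral gives $\langle\Gamma_1'+\ima\Gamma_2'\rangle=\gamma_1+\ima\gamma_2$.

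For (ii), I would evaluate $\Gamma_1'+\ima\Gamma_2'$ in a frame rotated clockwise by an angle $\phi$. The crucial bookkeeping step is that evaluating $\Gamma_j^{\theta}$ --- the CSE in axes rotated by $\theta$ relative to the ``original'' frame --- from within a frame already rotated by $\phi$ means using axes rotated by $\theta+\phi$ relative to the original frame, \ie\ it becomes $\Gamma_j^{\theta+\phi}$. Thus $(\Gamma_1')^{\phi}+\ima(\Gamma_2')^{\phi}=\frac{1}{2\pi}\int_0^{2\pi}d\theta\,\exp(-\ima 2\theta)(\Gamma_1^{\theta+\phi}+\ima\Gamma_2^{\theta+\phi})$. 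Substituting $\theta'=\theta+\phi$ pulls a factor $\exp(\ima 2\phi)$ out of the integral; the integrand is $2\pi$-periodic in $\theta'$ (rotating the axes by $2\pi$ restores them, and $\exp(-\ima 2\theta')$ has period $\pi$), so the shifted range $[\phi,2\pi+\phi]$ can be moved back to $[0,2\pi]$. This yields $(\Gamma_1')^{\phi}+\ima(\Gamma_2')^{\phi}=\exp(\ima 2\phi)(\Gamma_1'+\ima\Gamma_2')$, which is precisely the spin-2 law.

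The algebra is light; the care is all conceptual. The main potential pitfall is the composition-of-rotations bookkeeping in step (ii): one must be sure the nested rotation gives $\Gamma_j^{\theta+\phi}$ and not $\Gamma_j^{\theta-\phi}$, since this fixes the sign of the final exponent and hence whether one lands on spin $+2$ or spin $-2$; pairing this with the periodicity argument that restores the integration limits is the only real subtlety. For the unbiasedness part, the one point to keep in view is that rotation-invariance of the isotropic source ensemble is exactly what transports the original-frame unbiasedness to every rotated frame, so no new first-order estimate is needed.
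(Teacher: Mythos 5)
Your proposal is correct and follows essentially the same route as the paper: the spin-2 property is obtained by the same composition-of-rotations bookkeeping ($\Gamma_j^{\theta}$ seen from a frame rotated by $\phi$ becomes $\Gamma_j^{\theta+\phi}$), a change of variables, and periodicity of the integrand, while unbiasedness follows from $\langle\Gamma_1^{\theta}+\ima\Gamma_2^{\theta}\rangle=(\gamma_1+\ima\gamma_2)\exp(\ima 2\theta)$ inserted into the defining integral. Your added remark that rotation-invariance of the isotropic source ensemble is what justifies unbiasedness in every rotated frame is a slightly more explicit justification of a step the paper takes as definitional, but the argument is the same.
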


\begin{proof}
Firstly, under a clockwise coordinate rotation by angle $\theta_0$, we have:
\begin{eqnarray}
\label{trans_spin_2_rot}
&&\Gamma_1'^{\theta_0}+\ima\Gamma_2'^{\theta_0}\\ \nonumber
&=&\frac{1}{2\pi}\int_0^{2\pi}d\theta\exp(-\ima 2\theta)\left(\Gamma_1^{\theta_0+\theta}+\ima\Gamma_2^{\theta_0+\theta}\right)\\ \nonumber
&=&\frac{1}{2\pi}\int_0^{2\pi}d\theta'\exp[-\ima 2(\theta'-\theta_0)]\left(\Gamma_1^{\theta'}+\ima\Gamma_2^{\theta'}\right)\\ \nonumber
&=&\left(\Gamma_1'+\ima\Gamma_2'\right)\exp(\ima 2\theta_0)
\end{eqnarray}  
Therefore, $\Gamma_1'+\ima\Gamma_2'$ form a spin-2 quantity. To show that $(\Gamma_1', \Gamma_2')$ are a pair of CSEs, let us take the ensemble average on both sides of eq. (\ref{trans_spin_2_rot}): 
\begin{eqnarray}
\label{trans_ideal_esti}
&&\left\langle\Gamma_1'+\ima\Gamma_2'\right\rangle\\ \nonumber
&=&\frac{1}{2\pi}\int_0^{2\pi}d\theta\exp(-\ima 2\theta)\left\langle\Gamma_1^{\theta}+\ima\Gamma_2^{\theta}\right\rangle
\end{eqnarray}  
Since by definition, $\Gamma_1^{\theta}$ and $\Gamma_2^{\theta}$ measure the shear values in the rotated coordinates, we have,
\begin{eqnarray}
\label{Gamma_theta}
&&\left\langle\Gamma_1^{\theta}\right\rangle=\gamma_1^{\theta}=\gamma_1\cos 2\theta-\gamma_2\sin 2\theta\\ \nonumber
&&\left\langle\Gamma_2^{\theta}\right\rangle=\gamma_2^{\theta}=\gamma_1\sin 2\theta+\gamma_2\cos 2\theta
\end{eqnarray}
In a more compact form, we can write eq.(\ref{Gamma_theta}) as:
\begin{equation}
\label{Gamma_theta2}
\left\langle\Gamma_1^{\theta}+\ima\Gamma_2^{\theta}\right\rangle=(\gamma_1+\ima\gamma_2)\exp(\ima 2\theta)
\end{equation}
Using eq. (\ref{Gamma_theta2}) in eq. (\ref{trans_ideal_esti}), we get:
\begin{equation}
\label{trans_ideal_esti2}
\left\langle\Gamma_1'+\ima\Gamma_2'\right\rangle=\gamma_1+\ima\gamma_2
\end{equation}  
which proves that $(\Gamma_1', \Gamma_2')$ are indeed also a pair of CSEs. 
\end{proof}

Due to the invariance under a coordinate rotation of angle $2\pi$, one can always decompose any shear estimator into components of integer spins. Eq.(\ref{trans_spin_2}) essentially defines a procedure of isolating the spin-2 components of any CSEs using Fourier transformation. Since the cosmic shears form a spin-2 quantity, only the spin-2 components of any CSEs are the relevant/principle components of the estimators. This point is further supported by the fact that the ensemble averages of spin-$n$ ($n\ne 2$) components of any CSEs are zero, \ie,
\begin{eqnarray}
\label{trans_ideal_est3i}
&&\frac{1}{2\pi}\int_0^{2\pi}d\theta\exp(-\ima n\theta)\left\langle\Gamma_1^{\theta}+\ima\Gamma_2^{\theta}\right\rangle\\ \nonumber
&=&\frac{1}{2\pi}\int_0^{2\pi}d\theta\exp(-\ima n\theta)\left(\gamma_1+\ima\gamma_2\right)\exp(\ima 2\theta)\\ \nonumber
&=&0 \mbox{  ( if  } n\ne 2\mbox{ )}
\end{eqnarray}  
Therefore, we only need to focus on spin-2 CSEs from here on.

\subsubsection{Spin-2 CSEs}
\label{spin_2_esti}

In the weak lensing limit, \ie, when the cosmic shear parameters $(\gamma_1, \gamma_2, \kappa)$ are small, any spin-2 CSEs $(\Gamma_1, \Gamma_2)$ can be Taylor expanded to the first order in shear as follows:
\begin{eqnarray}
\label{Taylor_weak_lensing}
&&\Gamma_1(\gamma_1,\gamma_2,\kappa)\\ \nonumber
&=&\left(\Gamma_1\right)_0+\gamma_1\left(\frac{\partial\Gamma_1}{\partial\gamma_1}\right)_0+\gamma_2\left(\frac{\partial\Gamma_1}{\partial\gamma_2}\right)_0+\kappa\left(\frac{\partial\Gamma_1}{\partial\kappa}\right)_0\\ \nonumber
&&\Gamma_2(\gamma_1,\gamma_2,\kappa)\\ \nonumber
&=&\left(\Gamma_2\right)_0+\gamma_1\left(\frac{\partial\Gamma_2}{\partial\gamma_1}\right)_0+\gamma_2\left(\frac{\partial\Gamma_2}{\partial\gamma_2}\right)_0+\kappa\left(\frac{\partial\Gamma_2}{\partial\kappa}\right)_0
\end{eqnarray}
where $(X)_0$ ($X$ is any quantity) denotes the value of $X$ at $\gamma_1=\gamma_2=\kappa=0$. Since $(\Gamma_1)_0$, $(\Gamma_2)_0$, $(\partial_{\kappa}\Gamma_1)_0$, and $(\partial_{\kappa}\Gamma_2)_0$ are all spin-2 quantities, their ensemble average must vanish. On the other hand, the coefficients associated with $\gamma_1$ and $\gamma_2$ in eq.(\ref{Taylor_weak_lensing}) can be decomposed into spin-0 and spin-4 components as follows:
\begin{eqnarray}
\label{Taylor_weak_lensing2}
&&\left\langle\Gamma_1(\gamma_1,\gamma_2,\kappa)\right\rangle=\gamma_1\left\langle A+B_1\right\rangle+\gamma_2\left\langle C+B_2\right\rangle\\ \nonumber
&&\left\langle\Gamma_2(\gamma_1,\gamma_2,\kappa)\right\rangle=\gamma_1\left\langle B_2-C\right\rangle+\gamma_2\left\langle A-B_1\right\rangle
\end{eqnarray}
where 
\begin{eqnarray}
\label{ABC}
&& A=\frac{1}{2}\left(\partial_{\gamma_1}\Gamma_1+\partial_{\gamma_2}\Gamma_2\right)\\ \nonumber
&& C=\frac{1}{2}\left(\partial_{\gamma_2}\Gamma_1-\partial_{\gamma_1}\Gamma_2\right)\\ \nonumber
&& B_1=\frac{1}{2}\left(\partial_{\gamma_1}\Gamma_1-\partial_{\gamma_2}\Gamma_2\right)\\ \nonumber
&& B_2=\frac{1}{2}\left(\partial_{\gamma_2}\Gamma_1+\partial_{\gamma_1}\Gamma_2\right)
\end{eqnarray} 
As shown in Appendix A,  $A$ is a scalar, $C$ is a pseudo scalar, $B_1+\ima B_2$ is a spin-4 quantity. The ensemble averages of $B_1$ and $B_2$ must vanish. The ensemble average of $C$ vanishes if galaxy images have parity symmetry along any direction in the plane of the sky statistically, which is assumed to be true in this paper.  Consequently, for $(\Gamma_1, \Gamma_2)$ to be spin-2 CSEs, we only require $\langle A\rangle=1$. This actually implies that for any individual galaxy, $A=1$. The reason is that $A$ of any single galaxy does not change under coordinate rotation of random angles, and is equal to $\langle A\rangle$ because the galaxies generated by rotations of a single galaxy form a complete set of statistically isotropic samples (\ie, there are no special directions). As a result, any spin-2 CSEs $(\Gamma_1, \Gamma_2)$ must satisfy the following necessary condition:
\begin{equation}
\label{requirement}
\frac{\partial\Gamma_1}{\partial\gamma_1}+\frac{\partial\Gamma_2}{\partial\gamma_2}=2
\end{equation}
This is also a sufficient condition, because $A=1$ directly implies that $\langle A\rangle=1$.

In general, the CSEs are functions of a certain number of shape
parameters (\eg, the multipole moments of an image). The functions can
be very complicated, and are certainly not fixed by the requirement
given by eq.(\ref{requirement}). However, for galaxies whose shapes
are described by only three parameters (perfect ellipses), any CSE
should reduce to a function of just three variables. In this case, we
find that any pair of spin-2 CSEs must reduce to a unique form, which is
sufficient for us to judge whether CSEs are convenient in
practice: namely, if we find that the resulting form is highly non-linear even for such a simple case, then it is
reasonable to conclude that CSEs are not so useful for accurate shear measurements from more realistic
galaxy shapes as well. This is shown in the rest of this section. For clarity, we refer the readers to Appendix B for the mathematical details/proofs for some of the statements made hereafter in this section.

Let us consider a set of galaxies whose surface brightness profiles can be parametrized as $f_S(R)$ with $R=a(x^2+y^2)+b(x^2-y^2)+2cxy$, where $x$ and $y$ are the coordinates, $f_S(R)$ is a function of a fixed form, and $(a, b, c)$ are the three parameters determining galaxy shapes. For the images to be ellipses, we require the following three things: 1. $f_S(R)$ decays sufficiently fast when $R$ becomes large; 2. $a+b>0$; 3. $a^2-b^2>c^2$. For example, if $f_S(R)=H(R_c-R)$ ($H$ is the step function) and $(a, b, c)$ satisfy the above conditions, the galaxy surface brightness is then distributed evenly inside the ellipse defined by $a(x^2+y^2)+b(x^2-y^2)+2cxy\le R_c$. When such images are weakly lensed, the three conditions are not violated, and $(a, b, c)$ becomes $(a', b', c')$ without changing the form of $f_S$. In other words, weak lensing does not introduce additional degrees of freedom to the galaxy shapes. Note that otherwise, one has to consider using more than 3 parameters to construct shear estimators. Among the three parameters, there are indeed only two degrees of freedom useful for shear measurement: the ratios of the parameters. This is because the overall amplitudes of $(a, b, c)$ only change the galaxy size, not its shape. As shown in Appendix B, the ellipticities $(\epsilon_1, \epsilon_2)$ defined in \S\ref{classic} are directly equal to $(-b/a, -c/a)$, therefore, we can write the shear estimators as functions of only $\epsilon_1$ and $\epsilon_2$. 

We can further show that $(\Gamma_1, \Gamma_2)$ must take the following form:
\begin{equation}
\label{G_1_G_2_form}
\Gamma_1+\ima\Gamma_2=(\epsilon_1+\ima\epsilon_2)g(u)
\end{equation}
where $u=\epsilon_1^2+\epsilon_2^2$, and $g$ is a one-variable complex function, whose form is to be determined later in this section. To see why eq.(\ref{G_1_G_2_form}) is true, one can use the Taylor expansion to write $\Gamma_1$ and $\Gamma_2$ as power series of $\epsilon_1$ and $\epsilon_2$ \footnote{We do not consider shear estimators that cannot be Taylor expanded as power series of the galaxy shape parameters in this paper. Such shear estimators likely suffer numerical instabilities in practice.}.  Consequently, one can write $\Gamma_1+\ima\Gamma_2$ as power series of $\epsilon_1+\ima\epsilon_2$ and $\epsilon_1-\ima\epsilon_2$, whose spins are $2$ and $-2$ respectively. Since $\Gamma_1+\ima\Gamma_2$ is a spin-2 quantity, each term in the power series must also be a spin-2 quantity. Therefore, in each term of the power series, the power on $\epsilon_1+\ima\epsilon_2$ must be larger than that on $\epsilon_1-\ima\epsilon_2$ by exactly one, \ie, each term must take the form of  $\alpha(\epsilon_1+\ima\epsilon_2)(\epsilon_1^2+\epsilon_2^2)^n$, where $n$ is any non-negative integer, and $\alpha$ is a coefficient which can be any complex number at this point.  As a result,  the shear estimators must have the form defined in eq.(\ref{G_1_G_2_form}). To find out what $g(u)$ is, let us write it as $g_1(u)+\ima g_2(u)$. Eq. (\ref{G_1_G_2_form}) then becomes:
\begin{eqnarray}
\label{G_1_G_2_real}
&&\Gamma_1=\epsilon_1g_1(u)-\epsilon_2g_2(u)\\ \nonumber
&&\Gamma_2=\epsilon_1g_2(u)+\epsilon_2g_1(u)
\end{eqnarray}
Using the constraint in eq.(\ref{requirement}), we find:
\begin{eqnarray}
\label{constraint}
2&=&\frac{\partial\Gamma_1}{\partial\gamma_1}+\frac{\partial\Gamma_2}{\partial\gamma_2}\\ \nonumber
&=&\frac{\partial\Gamma_1}{\partial\epsilon_1}\frac{\partial\epsilon_1}{\partial\gamma_1}+\frac{\partial\Gamma_1}{\partial\epsilon_2}\frac{\partial\epsilon_2}{\partial\gamma_1}+\frac{\partial\Gamma_2}{\partial\epsilon_1}\frac{\partial\epsilon_1}{\partial\gamma_2}+\frac{\partial\Gamma_2}{\partial\epsilon_2}\frac{\partial\epsilon_2}{\partial\gamma_2}\\ \nonumber
&=&2(2-u)g_1(u)+4u(1-u)\frac{dg_1}{du}
\end{eqnarray}
It is interesting to note that eq.(\ref{constraint}) does not place any constraints on $g_2(u)$, \ie, it can be any real function. This is because $g_2(u)$ simply adds unnecessary even-parity terms into odd parity ones, and vice versa, without affecting the ensemble averages and the spin of the shear estimators. For convenience, we set $g_2(u)=0$ hereafter. 

Eq.(\ref{constraint}) is a typical first-order ordinary differential equation. It can be solved by introducing an integrating factor $k(u)$ which satisfies:
\begin{equation}
\label{ku}
k(u)(2-u)=\frac{d}{du}\left[2k(u)u(1-u)\right]
\end{equation}
Multiplying both sides of eq.(\ref{constraint}) with $k(u)$, we get:
\begin{equation}
\label{g_1_u}
\frac{d}{du}\left[2k(u)u(1-u)g_1(u)\right]=k(u)
\end{equation} 
It is now straightforward to solve both eq.(\ref{ku}) and eq.(\ref{g_1_u}). The results are:
\begin{eqnarray}
\label{results_g1}
&&k(u)\propto(1-u)^{-3/2}\\ \nonumber
&&g_1(u)=\frac{1}{u}\left(1+C\sqrt{1-u}\right) 
\end{eqnarray}
where $C$ is a real number constant. To guarantee that $\Gamma_1$ and $\Gamma_2$ do not diverge when $\epsilon_1$ and $\epsilon_2$ approach zero, we need $C=-1$. Finally, we find the unique form for the spin-2 CSEs: 
\begin{equation}
\label{G_1_G_2_form_final}
\Gamma_1+\ima\Gamma_2=(\epsilon_1+\ima\epsilon_2)\frac{1-\sqrt{1-\epsilon_1^2-\epsilon_2^2}}{\epsilon_1^2+\epsilon_2^2}
\end{equation}
Regarding the uniqueness, it is useful to note that if we transform the CSEs defined in  eq.(\ref{q7}) into spin-2 shear estimators using the procedure given in eq.(\ref{trans_spin_2}), we achieve the same shear estimators as those shown in eq.(\ref{G_1_G_2_form_final}). 

We have shown that the principle components (spin-2) of any pair of CSEs have to take specific and {\it highly} nonlinear forms for galaxies of elliptical shapes. This feature makes CSEs not convenient in practice (\eg, in the presence of noise).

\subsection{CSEs in the Presence of the PSF}
\label{prove}

Any CSEs which correct for the PSF effect also have to reduce to the forms given in eq.(\ref{G_1_G_2_form_final}) in the limit of zero PSF size when the galaxy images have pure elliptical shapes. For this reason, the conclusion in the previous section is already sufficient to argue against the usefulness of CSEs in practice.  For academic interests, we provide the following arguments for why CSEs may not even exist in the presence of the PSF:

In the presence of the point spread function, structural details of galaxy images on scales smaller than the size of the PSF are smeared out. This implies that there are only a finite number of shape parameters (\eg, multipole moments up to some order) available for constructing shear estimators. On the other hand, the derivatives of the lower order shape parameters (\eg, lower order multipole moments) with respect to the cosmic shears depend on the higher order shape parameters because of the PSF, suggesting the requirement for an infinite number of shape parameters to form the shear estimators. Combining the above two reasons, we find it unlikely to form CSEs when a PSF is present. The mathematical details of the above statements are given in Appendix C.

\section{A New Way of Estimating Shears}
\label{alternatives}

As searching for optimal shear estimators is actively
ongoing nowadays (\citealt{heymans06,massey07,bridle09,bridle10}), it is
important to realize that CSEs (``conventional shear
estimators,'' by which we mean the shear estimators 
that are made of just one number from a galaxy image for each shear
component) are hard to use in practice due to their unavoidable complex
forms even in the absence of the PSF (simpler forms, such as
the quadrupole moments, are biased estimators, as shown in \S~2.1).

Therefore, existing shear estimators of the conventional type must quantify the bias factor when estimating the shear, which can be achieved numerically (see, \eg, Erben et al.  2001, Bacon et al. 2001, or most recently, \citealt{heymans06,massey07,bridle10}) or estimated analytically (\eg, shear susceptibility in KSB [Kaiser et al. 1995] and derived methods, or responsivity factor in Bernstein \& Jarvis 2002 and similar methods), although most people have been mainly focusing on the systematic errors caused by the photon noise and the PSF. However, to achieve percent or even sub-percent level accuracy in cosmic shear measurements, it does not seem enough to completely rely on numerical tests using computer-generated galaxies of limited morphology richness, or approximate analytical methods. Unfortunately, in the presence of PSF, most of the existing shear measurement methods are too complicated or too model-dependent (\citealt{vb10,bernstein10}) to allow for an accurate analytic analysis of the systematic errors in their shear estimators.

The method of Z08 (see also \citealt{zhang10a} for the treatment of photon noise and the pixelation effect) is easily amenable to the corrections described in eq.(\ref{q5}), and can also account for the PSF correction. Not only is it simple, but also well
defined regardless of the morphologies of galaxies and the PSF. We show
here how to properly use this method (instead of using it as CSEs) to
recover the cosmic shear in an unbiased way.

\subsection{The Idea}

The basic idea of Z08 is to use the spatial derivatives of the galaxy
surface brightness field to measure the cosmic shears. It relies on the
fact that gravitational lensing does not only distort the overall shape
of the object, but also locally modifies the anisotropy of the gradient
field of the surface brightness. As it allows for using the
shape information from galaxy substructures, the method of Z08 can
potentially improve on the signal-to-noise ratio of the
shear measurements.

It is shown in Z08 that the shear measurement should be carried out in the Fourier space, in which any PSF can be transformed into the desired isotropic Gaussian form through multiplications, and the spatial derivatives of the surface brightness field can be easily measured. The cosmic shear can be estimated using the following relations:
\begin{eqnarray}
\label{shear12PSF}
&&\frac{1}{2}\frac{\langle (\partial_1f_O)^2-(\partial_2f_O)^2\rangle}{\langle (\partial_1f_O)^2+(\partial_2f_O)^2+\Delta\rangle}=-\gamma_1 \\ \nonumber
&&\frac{\langle\partial_1f_O\partial_2f_O\rangle}{\langle (\partial_1f_O)^2+(\partial_2f_O)^2+\Delta\rangle}=-\gamma_2
\end{eqnarray}
where
\begin{equation}
\label{Delta}
\Delta=\frac{\beta^2}{2}\vec{\nabla}f_O\cdot\vec{\nabla}(\nabla^2f_O)
\end{equation}
$\beta$ is the scale radius of the isotropic Gaussian PSF $W_{\beta}$, which is defined as:
\begin{equation}
W_{\beta}(\vec{\theta})=\frac{1}{2\pi\beta^2}\exp\left(-\frac{\vert\vec{\theta}\vert^2}{2\beta^2}\right)
\end{equation}    
$f_O$ is the surface brightness field. $\partial_i$ denotes $\partial /\partial x_i$. As shown in Appendix D, the method of Z08 effectively utilizes the quadrupole moments in the Fourier space to measure the cosmic shears.
 
\subsection{A New Unbiased Estimator}
\label{m_zhang08}

Now, here is an important point: in order to implement this
method, we must make it clear what we mean by the angular brackets in
eq.(\ref{shear12PSF}). First, we need to measure the derivatives of the
surface brightness and average them within a single galaxy. Let us
denote this averaging by $\langle\rangle_g$, and write:
\begin{eqnarray*}
&&\frac{1}{2}\frac{\langle (\partial_1f_O)^2-(\partial_2f_O)^2\rangle_g}{\langle (\partial_1f_O)^2+(\partial_2f_O)^2+\Delta\rangle_g} \\ 
&&\frac{\langle\partial_1f_O\partial_2f_O\rangle_g}{\langle (\partial_1f_O)^2+(\partial_2f_O)^2+\Delta\rangle_g}
\end{eqnarray*}
Of course, these are still extremely noisy as they use only one galaxy. The
question is then, ``how do we average these quantities over many
galaxies to obtain an unbiased estimator of the shears?''

If one uses these quantities as if they were the CSEs, then one would
simply average them over many galaxies. However, this will produce a
biased estimator:
\footnote{This bias was
not detected in the numerical calculations presented in Z08, as the number
of galaxies ($10^4$) used for the average was still too small.}  
\begin{eqnarray}
\label{shear12PSF_wrong}
&&\left\langle\frac{1}{2}\frac{\langle (\partial_1f_O)^2-(\partial_2f_O)^2\rangle_g}{\langle (\partial_1f_O)^2+(\partial_2f_O)^2+\Delta\rangle_g}\right\rangle_{en}=-\gamma_1(1-\delta_1) \\ \nonumber
&&\left\langle\frac{\langle\partial_1f_O\partial_2f_O\rangle_g}{\langle (\partial_1f_O)^2+(\partial_2f_O)^2+\Delta\rangle_g}\right\rangle_{en}=-\gamma_2(1-\delta_2)
\end{eqnarray}
where $\delta_1$ and $\delta_2$ are the ensemble averages of functions
of multipole moments of the galaxy images in Fourier space, and
$\langle\rangle_{en}$ denotes the ensemble average over many
galaxies. The derivation of the forms of $\delta_1$ and $\delta_2$ is
given in Appendix 
D. $\delta_1$ and $\delta_2$ are generally nonzero and dependent on the
galaxy morphology.

Instead, we need to take the ensemble averages of the
numerator and the denominator separately first, and then divide them to obtain an unbiased estimator:
\begin{eqnarray}
\label{shear12PSF2}
&&\frac{1}{2}\frac{\langle\langle (\partial_1f_O)^2-(\partial_2f_O)^2\rangle_g\rangle_{en}}{\langle\langle (\partial_1f_O)^2+(\partial_2f_O)^2+\Delta\rangle_g\rangle_{en}}=-\gamma_1 \\ \nonumber
&&\frac{\langle\langle\partial_1f_O\partial_2f_O\rangle_g\rangle_{en}}{\langle\langle (\partial_1f_O)^2+(\partial_2f_O)^2+\Delta\rangle_g\rangle_{en}}=-\gamma_2
\end{eqnarray}
This is the main result of this paper, and the form of the
unbiased estimator that we propose to use for the actual analysis of
the weak lensing data.

Of course, one could divide the left hand sides of
 eq.(\ref{shear12PSF_wrong}) by $1-\delta_1$ and $1-\delta_2$ to obtain
 an unbiased estimator. This is similar to correcting the measured shear for a multiplicative
bias that is evaluated from the same ensemble of galaxies. In this
sense, eq.(\ref{shear12PSF_wrong}) provides the exact definitions for
the multiplicative biases for $\gamma_1$ and $\gamma_2$. However, since
$\delta_1$ and $\delta_2$ in eq.(\ref{shear12PSF_wrong}) involve many
high order Fourier-space multipole moments of the surface brightness
field, evaluation of these terms from simulations (which are
incomplete anyway) can be highly uncertain. Even worse, 
the multiplicative bias mentioned here is not even a
constant, but depends on the morphological distribution of the
galaxies. This makes the conventional way of measuring shear
correlation functions even more challenging, as one must take into
account the {\it correlations} of the multiplicative 
biases, as will be shown in \S\ref{statistics}. 

In summary, 
according to eq.(\ref{shear12PSF2}), for each shear component, two quantities 
from each galaxy should be kept, and the ratios of their ensemble averages yield unbiased
estimates for the corresponding shear components. Finally, it is
important to note that, to efficiently use eq.(\ref{shear12PSF2}), the
surface brightness of each participating galaxy should be normalized to
have roughly the same maximum value, so that faint galaxies are not much
less weighted than their brighter counterparts. 
The details regarding the optimal weighting scheme as a function of the
galaxy luminosity should also take into account the photon noise. This
is a separate topic, and will be studied in a future work.

\subsection{Comments on Errors due to Finite Number of Galaxies}
\label{error_budget}

Strictly speaking, Eq.(\ref{shear12PSF2}) holds
when we average over an infinite number of galaxies; however, as we shall 
show in this section, the error that we make by having a finite number
of galaxies for averaging is much smaller than the statistical errors,
and thus the estimator remains unbiased for practical applications.

For simplicity,  we use eq.(\ref{q6}) rather than our main equation
[eq.(\ref{shear12PSF2})] in the following discussion, but the conclusion
will be the same for eq.(\ref{shear12PSF2}).

Let us use $\langle X\rangle_N$ to denote the average of the quantity
$X$ over  $N$ galaxies. From eq.(\ref{q3}), we get:
\begin{eqnarray}
\label{q3_error}
\left\langle Q_{11}-Q_{22}\right\rangle_N&=&(1+4\kappa)\left\langle Q_{11}^S-Q_{22}^S\right\rangle_N\\ \nonumber
&+&2\gamma_1\left\langle Q_{11}^S+Q_{22}^S\right\rangle_N\\ \nonumber
\left\langle Q_{12}\right\rangle_N&=&(1+4\kappa)\left\langle Q_{12}^S\right\rangle_N+\gamma_2\left\langle Q_{11}^S+Q_{22}^S\right\rangle_N\\ \nonumber
\left\langle Q_{11}+Q_{22}\right\rangle_N&=&(1+4\kappa)\left\langle Q_{11}^S+Q_{22}^S\right\rangle_N\\ \nonumber
&+&2\gamma_1\left\langle Q_{11}^S-Q_{22}^S\right\rangle_N+4\gamma_2\left\langle Q_{12}^S\right\rangle_N
\end{eqnarray}
Consequently, we have:
\begin{eqnarray}
\label{q4_error}
\frac{1}{2}\frac{\left\langle Q_{11}-Q_{22}\right\rangle_N}{\left\langle Q_{11}+Q_{22}\right\rangle_N}&=&\frac{1}{2}\Delta_1+\gamma_1(1-\Delta_1^2)-\gamma_2\Delta_1\Delta_2\\ \nonumber
\frac{\left\langle Q_{12}\right\rangle_N}{\left\langle Q_{11}+Q_{22}\right\rangle_N}&=&\frac{1}{2}\Delta_2+\gamma_2(1-\Delta_2^2)-\gamma_1\Delta_1\Delta_2
\end{eqnarray}
where
\begin{eqnarray}
\label{def_error}
\Delta_1&=&\frac{\left\langle Q_{11}^S-Q_{22}^S\right\rangle_N}{\left\langle Q_{11}^S+Q_{22}^S\right\rangle_N}\\ \nonumber
\Delta_2&=&\frac{2\left\langle Q_{12}^S\right\rangle_N}{\left\langle Q_{11}^S+Q_{22}^S\right\rangle_N}
\end{eqnarray}
Here, the terms $\Delta_1$ and $\Delta_1\Delta_2$ contribute to
random errors because their ensemble averages vanish, whereas the terms
$\Delta_1^2$ and $\Delta_2^2$ lead to systematic biases
because their ensemble averages do not vanish.
Fortunately, as $\Delta_{1, 2}$ scales as $1/\sqrt{N}$ and $\gamma_{1,
2}\ll 1$, the amplitudes of such systematic biases are always much
smaller than the sizes of the statistical errors. Therefore, the results
from this new type of shear estimators may be regarded as
unbiased for practical applications. Numerical 
verifications will be given in \S\ref{examples}.

\section{Shear Statistics - n-Point Correlations}
\label{statistics}

The cosmic shear field can only be probed statistically. This is mainly
due to the intrinsic variations of the galaxy shapes and the spatial
fluctuations of the shear components. As a result, the
shear statistics is usually studied in the form of n-point spatial
correlation functions of the shear field. The previous discussions and
measurements in the literature are based on ``conventional'' shear
estimators (CSEs), \ie, one often assumes that the following is true:
\begin{equation}
\label{assume_wr}
\langle\Gamma\rangle_{en}=\gamma
\end{equation}
where $\gamma$ can be either $\gamma_1$ or $\gamma_2$, and $\Gamma$ is a CSE for $\gamma$. For individual galaxies, eq.(\ref{assume_wr}) implies:
\begin{equation}
\label{assume_wr2}
\Gamma=\gamma+\Psi
\end{equation}
where $\Psi$ satisfies $\langle\Psi\rangle_{en}=0$\footnote{Note that for our purpose, it is not necessary to know the form of $\Psi$.}. It is usually assumed that $\Psi$'s of different galaxies do not correlate with each other\footnote{This is at least true if the relevant galaxies are separated by a large physical distance. Detailed discussions about the correlations of $\Psi$'s belong to the topic of ``Galaxy Intrinsic Alignment'', which is beyond the scope of this paper.}.  Therefore, the n-point correlation functions of the shear field can be directly measured by the correlations of $\Gamma$'s. 

However, in \S\ref{no_PSF}, we have shown that such a $\Gamma$ at least does not exist in a convenient form. Instead, as proposed in \S\ref{m_zhang08}, we can use the new form of shear estimators defined in eq.(\ref{shear12PSF2}) to probe the cosmic shear in an unbiased way. Let us now find out how to measure the n-point shear correlation functions with the new form of shear estimators. Numerical examples are given in \S\ref{examples}.

For notational convenience, the type of shear measurement in eq.(\ref{shear12PSF2}) can be symbolized as follows:
\begin{equation}
\label{symbol}
\frac{\langle A\rangle_{en}}{\langle B\rangle_{en}}=\gamma
\end{equation}
where $\gamma$ can be either $\gamma_1$ or $\gamma_2$, and $A$ and $B$ are properties of a galaxy, such as those defined in eq.(\ref{shear12PSF2}). Similar to eq.(\ref{assume_wr2}), eq.(\ref{symbol}) implies the following:
\begin{equation}
\label{symbol2}
A=\gamma B+C
\end{equation}
where $C$ satisfies $\langle C\rangle_{en}=0$. If we assume that the $C$ of any galaxy does not correlate with the $B$'s and $C$'s of other galaxies\footnote{Such correlations again belong to the topic of ``Galaxy Intrinsic Alignment,'' which is not considered in this paper.}, the n-point correlation functions of the shear field can be probed using the following relation:
\begin{equation}
\label{symbol3}
\left\langle\gamma(\vec{x}_1)\gamma(\vec{x}_2)\ldots\gamma(\vec{x}_n)\right\rangle_{en}=\frac{\left\langle A(\vec{x}_1)A(\vec{x}_2)\ldots A(\vec{x}_n)\right\rangle_{en}}{\left\langle B(\vec{x}_1)B(\vec{x}_2)\ldots B(\vec{x}_n)\right\rangle_{en}}
\end{equation}   
The ensemble averages are taken over a large number of galaxies whose relative positions $\vec{x}_i-\vec{x}_j$ ($i,j=1,2,\ldots,n$) are fixed. In practice, the n-point shear correlation functions can be measured using:
\begin{equation}
\label{symbol33}
\frac{\sum A(\vec{x}_{i_1})A(\vec{x}_{i_2})\ldots A(\vec{x}_{i_n})}{\sum B(\vec{x}_{i_1})B(\vec{x}_{i_2})\ldots B(\vec{x}_{i_n})}
\end{equation}
where the sum is taken over all the galaxy groups that satisfy the positional constraints. Note that {\it the ratio is taken after the summations}. The standard deviation ($\sigma$) of the correlation function in such a measurement can be calculated as follows:
\begin{eqnarray}
\label{symbol34}
\sigma^2&\doteq& \left\langle\left[\frac{\sum A(\vec{x}_{i_1})A(\vec{x}_{i_2})\ldots A(\vec{x}_{i_n})}{\sum B(\vec{x}_{i_1})B(\vec{x}_{i_2})\ldots B(\vec{x}_{i_n})}\right]^2\right\rangle_{en}\\ \nonumber
&\doteq&\left\langle\frac{\sum A^2(\vec{x}_{i_1})A^2(\vec{x}_{i_2})\ldots A^2(\vec{x}_{i_n})}{\left[\sum B(\vec{x}_{i_1})B(\vec{x}_{i_2})\ldots B(\vec{x}_{i_n})\right]^2}\right\rangle_{en}\\ \nonumber
&=&\frac{N\left\langle A^2(\vec{x}_1)A^2(\vec{x}_2)\ldots A^2(\vec{x}_n)\right\rangle_{en}}{N^2\left\langle B(\vec{x}_1)B(\vec{x}_2)\ldots B(\vec{x}_n)\right\rangle_{en}^2}\\ \nonumber
&=&\frac{1}{N}\left(\frac{\langle A^2\rangle_{en}}{\langle B\rangle_{en}^2}\right)^n
\end{eqnarray} 
where $N$ is the total number of galaxy groups (\eg, the number of galaxy pairs for 2-point correlations) used. 

To summarize, in the new type of shear measurement, the shear correlation function should be measured using the ratio of two ensemble averages, as shown in eq.(\ref{symbol3}). If $B$ in eq.(\ref{symbol}) is viewed as a multiplicative bias, we need to measure the {\it correlations} of these multiplicative biases as well in order to get the correct shear correlation functions.

\section{Numerical Tests} 
\label{examples}

In this section, we show how accurately one can recover the cosmic
shears and their 2-point correlation functions with the method proposed
in \S\ref{alternatives} and \S\ref{statistics}.  For a comparison, we
also show the results using the method of Z08 in the ``conventional''
(but wrong) way, \ie, eq.(\ref{shear12PSF_wrong}), but without taking
into account the biases, $\delta_1$ and $\delta_2$.\footnote{Note that
Z08 did not detect these biases because it did not use
enough galaxies ($10^4$) in the numerical tests. Here, we use $10^7$
galaxies (or pairs of galaxies for the 2-point correlation function) for
each test.} Since our focus is to demonstrate principles, we do not
include photon noise or the pixelation effect in this
paper, and we approximate the PSF as an isotropic Gaussian. 
Including these effects is straightforward.
(Note also that the conventional estimators yield biased results even in this
idealized case.)

 More comprehensive tests of the method of Z08 have been presented in \cite{zhang10b}, which further improves the accuracy of shear measurement by including the second order shear/convergence terms. As shown in that paper, the second order correction is proportional to the convergence $\kappa$. We simply set $\kappa=0$ in the numerical simulations here. The residual systematic error on the recovered shear ($\gamma_1$ or $\gamma_2$) therefore should be expected to have a magnitude comparable to the third order terms in shear (\eg, $\gamma_1^3$, $\gamma_1^2\gamma_2$, $\gamma_1\gamma_2^2$, $\gamma_2^3$).  The errors shown in the rest of this section are all at the $1\sigma$ confidence level.

\subsection{Image Generation}
\label{image}

The mock galaxy images we use in our numerical tests are generated by the algorithms introduced in Z08 and Zhang (2010a), \ie, each galaxy is generated as a collection of point sources.  The reason is simple: one can accurately and easily mimic the lensing effect by displacing the points. It also allows us to generate galaxies of complex morphologies. There are two types of galaxies we use in this paper: 1. randomly oriented regular galaxies, each of which contains an exponential disk in the galactic plane (no bulge); 2. irregular galaxies being made of points generated by the trajectories of 2D random walks. For simplicity, the PSF is always an isotropic Gaussian function, whose scale radius is four times the grid size to avoid the pixelation problem. 
All the lengths in our simulations are in units of the grid size in the rest of this section. The dimension of the grid is $64\times 64$.

\subsection{1-Point Statistics}
\label{one_point}

As our first example, we study how accurately a single input cosmic shear can be recovered by a large number of mock galaxies, \ie, the 1-point statistics. We use the regular type mock galaxies as introduced in \S\ref{image}. Each disk galaxy is composed of ten point sources which are randomly  distributed within a radius of 7. The intensity of a point is an exponentially decaying function of its distance to the center of the disk with a decay length equal to 7. The galactic disk is then projected onto the source plane in a random direction.  For each input shear value, we use $10^7$ mock galaxies to recover the shear. 

To quantify the accuracy of shear recovery, we adopt the standard technique in the weak lensing community by using the ``multiplicative bias'' $m_i$ and the ``additive bias'' $c_i$, which are defined as:
\begin{eqnarray}
\label{mc1}
&&\gamma_1^{measured}=(1+m_1)\gamma_1^{input}+c_1\\ \nonumber
&&\gamma_2^{measured}=(1+m_2)\gamma_2^{input}+c_2
\end{eqnarray}

Our simulations use six sets of input shear values ($\gamma_1$,
$\gamma_2$). They are: ($0.05$, $-0.05$), ($0.03$, $-0.03$), ($0.01$,
$-0.01$), ($-0.01$, $0.01$), ($-0.03$, $0.03$), ($-0.05$, $0.05$). The
recovered shear values as well as the linear fitting results for
$\gamma_1$ and $\gamma_2$ are shown in Table \ref{one_p_gamma1} and
Table \ref{one_p_gamma2} respectively.  
Note that we also list the values of $\chi^2$'s and $Q$'s for the
goodness of linear fitting (see \citealt{press92} for details). For a
comparison, we show in the last column of each table the quality of the
shear recoveries using the ``conventional'' (but wrong) way, given by
eq.(\ref{shear12PSF_wrong}).  

The tables show that the shear recovery can be very accurate if we use
the method of Z08 in the proper way, \ie, eq.(\ref{shear12PSF2}). On the
other hand, if Z08 is used in the ``conventional'' way, \ie,
eq.(\ref{shear12PSF_wrong}), we find nonzero multiplicative biases
($m_1$, $m_2$) for both $\gamma_1$ and $\gamma_2$. The additive biases
$c_1$ and $c_2$ are consistent with being zero. These results agree with
our conclusions in \S\ref{m_zhang08}.    

\begin{table}

	\centering 
	\begin{tabular}
		{c|c|c}
		
		Input $\gamma_1$ & $\gamma_1$ measured in  & $\gamma_1$ measured in \\
		  & the proper way & the ``conventional'' way \\

		\hline \hline  0.05  & 0.05004$\pm$0.00008 & 0.05217$\pm$0.00009\\

		\hline 0.03 & 0.03004$\pm$0.00008 & 0.03132$\pm$0.00009\\

		\hline 0.01 & 0.01006$\pm$0.00008 & 0.01049$\pm$0.00009\\

		\hline $-0.01$ & $-0.01002\pm$0.00008 & $-0.01046\pm$0.00009\\

		\hline $-0.03$ & $-0.02997\pm$0.00008& $-0.03126\pm$0.00009\\

		\hline $-0.05$ & $-0.05005\pm$0.00008 & $-0.05218\pm$0.00009\\

		\hline \hline & &\\
		 & & \\

		Linear Fitting & Using $\gamma_1$'s from  & Using $\gamma_1$'s from \\
		Results  & the proper way & the ``conventional'' way \\

		\hline\hline $m_1$ & $(0.7\pm1.0)\times10^{-3}$ & $(43.4\pm1.1)\times10^{-3}$\\
		
		\hline $c_1$ & $(1.6\pm3.4)\times 10^{-5}$ & $(1.2\pm3.7)\times10^{-5}$\\
		
		\hline ($\chi^2_1$,  $Q_1$) & (0.58, 0.96) & (0.58, 0.96)\\
		
		\hline\hline
		
\end{tabular}
\caption{In the middle and right columns of the upper part of the table,
 we list the measured $\gamma_1$'s from the proper way and the
 ``conventional'' way of using the method of Z08, respectively. The
 lower part of the table shows the multiplicative biases, the additive
 biases, and the goodness of the linear fittings for both cases. The
 definition of the linear fitting here is given in eq.(\ref{mc1}).} 
\label{one_p_gamma1}
\end{table}

\begin{table}

	\centering 
	\begin{tabular}
		{c|c|c}
		
		Input $\gamma_2$ & $\gamma_2$ measured in  & $\gamma_2$ measured in \\
		  & the proper way & the ``conventional'' way \\

		\hline \hline  0.05 & 0.05011$\pm$0.00008 & 0.05226$\pm$0.00009\\

		\hline 0.03 & 0.02996$\pm$0.00008 & 0.03126$\pm$0.00009\\

		\hline 0.01 & 0.00989$\pm$0.00008 & 0.01033$\pm$0.00009 \\

		\hline $-0.01$ & $-0.01010\pm$0.00008 & $-0.01055\pm$0.00009\\

		\hline $-0.03$ & $-0.03000\pm$0.00008 & $-0.03126\pm$0.00009\\

		\hline $-0.05$ & $-0.05011\pm$0.00008 & $-0.05225\pm$0.00009\\

		\hline \hline & &\\
		 & & \\

		Linear Fitting & Using $\gamma_2$'s from  & Using $\gamma_2$'s from \\
		Results  & the proper way & the ``conventional'' way \\

		\hline\hline $m_2$ & $(1.4\pm1.0)\times10^{-3}$ & $(44.2\pm1.1)\times10^{-3}$\\
		
		\hline $c_2$ & $(-4.1\pm3.4)\times 10^{-5}$ & $(-3.5\pm3.7)\times10^{-5}$\\
		
		\hline ($\chi^2_2$, $Q_2$) & (3.3, 0.51) & (3.4, 0.50)\\
		
		\hline\hline

\end{tabular}
\caption{Same as table \ref{one_p_gamma1}, except that it is for $\gamma_2$.}
\label{one_p_gamma2}	
\end{table}

\subsection{2-Point Correlations}
\label{two_point}

Let us now test the recovery of the 2-point shear correlations using the
method of Z08 in the way that is proposed in \S\ref{statistics}.  In
each test, we use two large groups of galaxies that are 
given ($\gamma_1$, $\gamma_2$) and ($\gamma_1'$, $\gamma_2'$),
respectively. Each group of each test contains $10^7$ galaxies. The
values of $\gamma_1$, $\gamma_2$, $\gamma_1'$, and $\gamma_2'$ vary from
galaxy to galaxy. They are assumed to be normally distributed with the
following covariance matrix:
$\langle\gamma_1^2\rangle=\langle\gamma_1'^2\rangle=\langle\gamma_2^2\rangle=\langle\gamma_2'^2\rangle=0.04^2$,
$\langle\gamma_1\gamma_2\rangle=\langle\gamma_1\gamma_2'\rangle=\langle\gamma_1'\gamma_2\rangle=\langle\gamma_1'\gamma_2'\rangle=0$,
and $\langle\gamma_1\gamma_1'\rangle$ and
$\langle\gamma_2\gamma_2'\rangle$ are to be specified in each test. The
purpose of the test is to find out how accurately the 2-point correlations
$\langle\gamma_1\gamma_1'\rangle$ and $\langle\gamma_2\gamma_2'\rangle$
can be recovered. For the tests presented here, we use the irregular type of mock galaxies, each of which is made of ten point sources generated by the 2D random walks. Each step size of the random walks is a random number between 0 and 2. The radius of each galaxy is limited to be less than 7. 

In table \ref{two_p_gamma1} and \ref{two_p_gamma2}, we report the results of six tests with six different sets of ($\langle\gamma_1\gamma_1'\rangle$, $\langle\gamma_2\gamma_2'\rangle$): (0.001,0.001), (0.0006, 0.0006), (0.0002, 0.0002), ($-0.0002$, $-0.0002$), ($-0.0006$, $-0.0006$), ($-0.001$, $-0.001$). As in \S\ref{one_point},  to characterize the accuracy of the method, we again use the multiplicative biases ($m_{11}$, $m_{22}$) and additive biases ($c_{11}$, $c_{22}$) that are defined as follows:
\begin{eqnarray}
\label{mc2}
&&\langle\gamma_1\gamma_1'\rangle^{measured}=(1+m_{11})\langle\gamma_1\gamma_1'\rangle^{input}+c_{11}\\ \nonumber
&&\langle\gamma_2\gamma_2'\rangle^{measured}=(1+m_{22})\langle\gamma_2\gamma_2'\rangle^{input}+c_{22}
\end{eqnarray}
We also show the results from the ``conventional'' way of using Z08 for a comparison. 

Our results again show no systematic errors for the proper way of using
Z08. In contrast, the ``conventional'' way tends to underestimate the
amplitudes of the shear correlations (negative multiplicative bias).
The signs of the multiplicative biases in the ``conventional'' cases are
opposite to those found in \S\ref{one_point}. This is because we have
used two different types of mock galaxies.

\begin{table}

	\centering 
	\begin{tabular}
		{c|c|c}
		
		Input $\langle\gamma_1\gamma_1'\rangle$ & $\langle\gamma_1\gamma_1'\rangle$ measured in  & $\langle\gamma_1\gamma_1'\rangle$ measured in \\
		  & the proper way & the ``conventional'' way \\

		\hline \hline  $10^{-3}$  & $(0.99\pm0.03)\times10^{-3}$ & $(0.71\pm0.02)\times10^{-3}$\\

		\hline $6\times10^{-4}$ & $(6.3\pm0.3)\times10^{-4}$ & $(4.5\pm0.2)\times10^{-4}$\\

		\hline $2\times10^{-4}$ & $(1.7\pm0.3)\times10^{-4}$ & $(1.2\pm0.2)\times10^{-4}$\\

		\hline $-2\times10^{-4}$ & $(-2.1\pm0.3)\times10^{-4}$ & $(-1.4\pm0.2)\times10^{-4}$\\

		\hline $-6\times10^{-4}$ & $(-6.1\pm0.3)\times10^{-4}$ & $(-4.5\pm0.2)\times10^{-4}$\\

		\hline $-10^{-3}$ & $(-0.99\pm0.03)\times10^{-3}$ & $(-0.72\pm0.02)\times10^{-3}$\\

		\hline \hline & &\\
		 & & \\

		Linear Fitting & Using $\langle\gamma_1\gamma_1'\rangle$'s from  & Using $\langle\gamma_1\gamma_1'\rangle$'s from \\
		Results  & the proper way & the ``conventional'' way \\

		\hline\hline $m_{11}$ & -0.001$\pm$0.018 & -0.281$\pm$0.012\\
		
		\hline $c_{11}$ & $(-0.18\pm1.2)\times 10^{-5}$ & $(-4.1\pm7.9)\times10^{-6}$\\
		
		\hline ($\chi^2_{11}$, $Q_{11}$) & (2.5, 0.64) & (2.3, 0.68)\\
		
		\hline\hline

\end{tabular}
\caption{In the middle and right columns of the upper part of the table,
 we list the measured $\langle\gamma_1\gamma_1'\rangle$'s from the
 proper way and the ``conventional'' way of using the method of Z08, respectively. The lower part of the table shows the multiplicative biases, the additive biases, and the goodness of the linear fittings for both cases. The definition of linear fitting here is given in eq.(\ref{mc2}).}
\label{two_p_gamma1}
\end{table}

\begin{table}

	\centering 
	\begin{tabular}
		{c|c|c}
		
		Input $\langle\gamma_2\gamma_2'\rangle$ & $\langle\gamma_2\gamma_2'\rangle$ measured in  & $\langle\gamma_2\gamma_2'\rangle$ measured in \\
		  & the proper way & the ``conventional'' way \\

		\hline \hline  $10^{-3}$  & $(0.99\pm0.03)\times10^{-3}$ & $(0.71\pm0.02)\times10^{-3}$\\

		\hline $6\times10^{-4}$ & $(5.8\pm0.3)\times10^{-4}$ & $(4.2\pm0.2)\times10^{-4}$\\

		\hline $2\times10^{-4}$ & $(2.0\pm0.3)\times10^{-4}$ & $(1.3\pm0.2)\times10^{-4}$\\

		\hline $-2\times10^{-4}$ & $(-1.7\pm0.3)\times10^{-4}$ & $(-1.3\pm0.2)\times10^{-4}$\\

		\hline $-6\times10^{-4}$ & $(-6.0\pm0.3)\times10^{-4}$ & $(-4.4\pm0.2)\times10^{-4}$\\

		\hline $-10^{-3}$ & $(-1.03\pm0.03)\times10^{-3}$ & $(-0.73\pm0.02)\times10^{-3}$\\

		\hline \hline & &\\
		 & & \\

		Linear Fitting & Using $\langle\gamma_2\gamma_2'\rangle$'s from  & Using $\langle\gamma_2\gamma_2'\rangle$'s from \\
		Results  & the proper way & the ``conventional'' way \\

		\hline\hline $m_{22}$ & 0.000$\pm$0.018 & -0.284$\pm$0.012\\
		
		\hline $c_{22}$ & $(-0.61\pm1.2)\times 10^{-5}$ & $(-6.2\pm7.9)\times10^{-6}$\\
		
		\hline ($\chi^2_{22}$, $Q_{22}$) & (1.8, 0.77) & (1.6, 0.82)\\
		
		\hline\hline
		
\end{tabular}
\caption{Same as table \ref{two_p_gamma1}, except that it is for $\langle\gamma_2\gamma_2'\rangle$.}
\label{two_p_gamma2}
\end{table}

\section{Summary}
\label{summary}

Conventionally, in the studies of weak lensing, for each shear component
($\gamma_1$ or $\gamma_2$), one hopes to construct a single
quantity from each background galaxy image, whose ensemble average is
equal to the true  value of a component of the true shear field. We have
shown that such conventional shear estimators (CSEs) {\it do not exist
in convenient forms} even in the absence of the PSF.

Based on the method of Zhang (2008), we have proposed to measure the cosmic
shear in a new way: using the ratio of the ensemble averages of two
galaxy properties to estimate each shear component. (Also
see \S~9.2 of Weinberg (2008) for a similar study.)
We have shown that, using
both analytic analyses and numerical examples, the new way of
estimating cosmic shears is unbiased, and does not contain
systematic errors to the first 
order in shear at least. The new type of shear measurement demands shear
statistics such as n-point correlation functions to be measured in an unconventional way as well, but with little additional cost. 

\section*{Acknowledgements}

The author acknowledges Gary Bernstein, Toshifumi Futamase, Yi Mao, Pengjie Zhang for helpful discussions, and Shanghai Astronomical Observatory (SHAO), National Astronomical Observatories of China (NAOC) for their hospitality. In particular, JZ would like to thank Gary Bernstein for pointing out a mistake in the proof for the nonexistence of CSEs in the presence of PSF in the previous version of this manuscript. JZ is currently supported by the TCC Fellowship of Texas Cosmology Center of the University of Texas at Austin, and was previously supported by the TAC Fellowship of the Theoretical Astrophysics Center of UC Berkeley, where part of this work was done. EK is supported in part by NSF grants AST-0807649 and PHY-0758153 and NASA grant NNX08AL43G.

\bibliographystyle{mn2e}

\vskip 1cm

\section*{Appendix A -- The Derivatives of the Spin-2 CSEs}
\label{appendixA}

Under a clockwise coordinate rotation of angle $\theta$, the cosmic shear components transform according to the following rule:
\begin{equation}
\label{shear_rotation}
\gamma_1^{\theta}+\ima\gamma_2^{\theta}=\left(\gamma_1+\ima\gamma_2\right)\exp(\ima 2\theta)
\end{equation}
Using the chain rule, we then find:
\begin{equation}
\label{shear_deriv_rotation}
\frac{\partial}{\partial\gamma_1^{\theta}}+\ima\frac{\partial}{\partial\gamma_2^{\theta}}=\left(\frac{\partial}{\partial\gamma_1}+\ima\frac{\partial}{\partial\gamma_2}\right)\exp(\ima 2\theta)
\end{equation}
On the other hand, we know that the spin-2 shear estimators $(\Gamma_1, \Gamma_2)$ transform as:
\begin{equation}
\label{Estimator_transform}
\Gamma_1^{\theta}+\ima\Gamma_2^{\theta}=\left(\Gamma_1+\ima\Gamma_2\right)\exp(\ima 2\theta)
\end{equation}
Apply eq.(\ref{shear_deriv_rotation}) onto eq. (\ref{Estimator_transform}), we get:
\begin{eqnarray}
\label{spin-4}
&&\left(\frac{\partial\Gamma_1^{\theta}}{\partial\gamma_1^{\theta}}-\frac{\partial\Gamma_2^{\theta}}{\partial\gamma_2^{\theta}}\right)+\ima\left(\frac{\partial\Gamma_1^{\theta}}{\partial\gamma_2^{\theta}}+\frac{\partial\Gamma_2^{\theta}}{\partial\gamma_1^{\theta}}\right)\\ \nonumber
&=&\left[\left(\frac{\partial\Gamma_1}{\partial\gamma_1}-\frac{\partial\Gamma_2}{\partial\gamma_2}\right)+\ima\left(\frac{\partial\Gamma_1}{\partial\gamma_2}+\frac{\partial\Gamma_2}{\partial\gamma_1}\right)\right]\exp(\ima 4\theta)
\end{eqnarray}
Eq.(\ref{spin-4}) clearly shows that the derivatives $(\partial_{\gamma_1}\Gamma_1-\partial_{\gamma_2}\Gamma_2)+\ima(\partial_{\gamma_2}\Gamma_1+\partial_{\gamma_1}\Gamma_2)$ form a spin-4 quantity. We can also apply eq.(\ref{shear_deriv_rotation}) onto the complex conjugate of eq. (\ref{Estimator_transform}), and get:
\begin{eqnarray}
\label{spin-0}
&&\left(\frac{\partial\Gamma_1^{\theta}}{\partial\gamma_1^{\theta}}+\frac{\partial\Gamma_2^{\theta}}{\partial\gamma_2^{\theta}}\right)+\ima\left(\frac{\partial\Gamma_1^{\theta}}{\partial\gamma_2^{\theta}}-\frac{\partial\Gamma_2^{\theta}}{\partial\gamma_1^{\theta}}\right)\\ \nonumber
&=&\left(\frac{\partial\Gamma_1}{\partial\gamma_1}+\frac{\partial\Gamma_2}{\partial\gamma_2}\right)+\ima\left(\frac{\partial\Gamma_1}{\partial\gamma_2}-\frac{\partial\Gamma_2}{\partial\gamma_1}\right)
\end{eqnarray}
Therefore, $\partial_{\gamma_1}\Gamma_1+\partial_{\gamma_2}\Gamma_2$ is a scalar, and $\partial_{\gamma_2}\Gamma_1-\partial_{\gamma_1}\Gamma_2$ is a pseudo-scalar (it has odd parity). 

\section*{Appendix B -- Some Mathematical Details Regarding Galaxies of Elliptical Shapes}
\label{appendixB}

The surface brightness profiles of the galaxies used in \S\ref{spin_2_esti} are parametrized as $f_S(R)$, where $R=a(x^2+y^2)+b(x^2-y^2)+2cxy$, and $x$ and $y$ are the coordinate variables. Curves of constant values of $R$ can be ellipses, hyperbolas, parabolas, or lines depending on the values of $a$, $b$, and $c$. For our purposes, we only need ellipses. This requires $a+b>0$ and $a^2-b^2>c^2$ according to the matrix theory. Since linear coordinate transformations do not spoil these relations, weakly lensed images of these galaxies are still ellipses. For the same reason, one can also easily show that the lensed images are still parametrized by the same function $f_S(R)$ with $R=a'(x^2+y^2)+b'(x^2-y^2)+2c'xy$, where $a'=a(1-2\kappa)-2\gamma_1b-2\gamma_2c$, $b'=b(1-2\kappa)-2\gamma_1a$, and $c'=c(1-2\kappa)-2\gamma_2a$.

Let us now show that the ellipticity parameters $(\epsilon_1, \epsilon_2)$ for galaxies of the form  $f_S\left[a(x^2+y^2)+b(x^2-y^2)+2cxy\right]$ are equal to $(-b/a, -c/a)$. According to the definitions in \S\ref{classic}, we have:
\begin{equation}
\label{e1e2}
Q_{ij}=\int d^2\vec{x}x_ix_jf_S\left[a(x^2+y^2)+b(x^2-y^2)+2cxy\right]
\end{equation}
This integration can be carried out using the following linear coordinate transformation:
\begin{eqnarray}
\label{ct_e1e2}
&&x=\frac{1}{\sqrt{a+b}}(\beta x'+\alpha y')\\ \nonumber
&&y=\frac{1}{\sqrt{a-b}}(\alpha x'+\beta y')
\end{eqnarray} 
where 
\begin{eqnarray}
\label{ab_ct}
\alpha&=&\frac{1}{2}\left(\sqrt{1-\frac{c}{\sqrt{a^2-b^2}}}-\sqrt{1+\frac{c}{\sqrt{a^2-b^2}}}\right)\\ \nonumber
&\times&\sqrt{\frac{a^2-b^2}{a^2-b^2-c^2}}\\ \nonumber
\beta&=&\frac{1}{2}\left(\sqrt{1-\frac{c}{\sqrt{a^2-b^2}}}+\sqrt{1+\frac{c}{\sqrt{a^2-b^2}}}\right)\\ \nonumber
&\times&\sqrt{\frac{a^2-b^2}{a^2-b^2-c^2}}
\end{eqnarray}
Using $(x', y')$ defined in eq.(\ref{ct_e1e2}) to replace $(x, y)$ in eq.(\ref{e1e2}), we get:
\begin{eqnarray}
\label{QQs}
&&Q_{11}-Q_{22}\\ \nonumber
&=&\frac{\beta^2-\alpha^2}{\sqrt{a^2-b^2}}\int d^2\vec{x'}\left[\frac{a(\beta^2-\alpha^2)}{a^2-b^2}(x'^2-y'^2)\right.\\ \nonumber
&-&\left.\frac{b(\alpha^2+\beta^2)}{a^2-b^2}(x'^2+y'^2)-\frac{4b\alpha\beta}{a^2-b^2}x'y'\right]f_S(x'^2+y'^2)\\ \nonumber
&&Q_{11}+Q_{22}\\ \nonumber
&=&\frac{\beta^2-\alpha^2}{\sqrt{a^2-b^2}}\int d^2\vec{x'}\left[-\frac{b(\beta^2-\alpha^2)}{a^2-b^2}(x'^2-y'^2)\right.\\ \nonumber
&+&\left.\frac{a(\alpha^2+\beta^2)}{a^2-b^2}(x'^2+y'^2)+\frac{4a\alpha\beta}{a^2-b^2}x'y'\right]f_S(x'^2+y'^2)\\ \nonumber
&&2Q_{12}\\ \nonumber
&=&\frac{\beta^2-\alpha^2}{\sqrt{a^2-b^2}}\int d^2\vec{x'}\left[\frac{2\alpha\beta}{\sqrt{a^2-b^2}}(x'^2+y'^2)\right.\\ \nonumber
&+&\left.\frac{2(\alpha^2+\beta^2)}{\sqrt{a^2-b^2}}x'y'\right]f_S(x'^2+y'^2)
\end{eqnarray} 
Given that the function $f_S$ only depends on $x'^2+y'^2$, we have:
\begin{eqnarray}
\label{facts}
&&\int d^2\vec{x'}(x'^2-y'^2)f_S(x'^2+y'^2)\\ \nonumber
&=&\int d^2\vec{x'}x'y'f_S(x'^2+y'^2)=0
\end{eqnarray}
It is now straightforward to calculate $(\epsilon_1, \epsilon_2)$:
\begin{eqnarray}
\label{e_12s}
&&\epsilon_1=\frac{Q_{11}-Q_{22}}{Q_{11}+Q_{22}}=-\frac{b}{a}\\ \nonumber
&&\epsilon_2=\frac{2Q_{12}}{Q_{11}+Q_{22}}=-\frac{c}{a}
\end{eqnarray}

\section*{Appendix C -- Why CSEs Do Not Likely Exist In the Presence of PSF }
\label{appendixC}

For technical convenience, let us work in Fourier space. According to the notations in \S\ref{classic}, we use $\widetilde{f_L}(\vec{k}^L)$ and $\widetilde{f_S}(\vec{k}^S)$ to denote the Fourier transformations of the lensed galaxy image $f_L(\vtl)$ and the original galaxy image $f_S(\vts)$ respectively. Their relations are given by the following equations:
\begin{eqnarray}
\label{Fourier}
&&\widetilde{f_L}(\vec{k}^L)=\int d^2\vtl e^{\ima\vec{k}^L\cdot\vtl}f_L(\vtl)\\ \nonumber
&&\widetilde{f_S}(\vec{k}^S)=\int d^2\vts e^{\ima\vec{k}^S\cdot\vts}f_S(\vts)
\end{eqnarray}
Using the relations defined in eq.(\ref{fifstits}), we get:
\begin{eqnarray}
\label{Fourier2}
\widetilde{f_L}(\vec{k}^L)&=&\int d^2\vts \left\vert \mathrm{det} \left(\frac{\partial \vtl}{\partial \vts}\right)\right\vert e^{\ima\vec{k}^L\cdot(\mathbf{A}\vts)}f_S(\vts)\\ \nonumber
&=&\vert \mathrm{det} (\mathbf{A})\vert\int d^2\vts e^{\ima(\mathbf{A}\vec{k}^L)\cdot\vts}f_S(\vts)\\ \nonumber
&=&\vert \mathrm{det} (\mathbf{A})\vert\widetilde{f_S}(\mathbf{A}\vec{k}^L)
\end{eqnarray}
Eq. (\ref{Fourier2}) simply means that under lensing,  the Fourier transformation of the galaxy image is changed from  $\widetilde{f_S}(\vec{k})$ to $\vert \mathrm{det} (\mathbf{A})\vert\widetilde{f_S}(\mathbf{A}\vec{k})$.

Due to the presence of the PSF, the Fourier transformation of the observed image $\widetilde{f_O}$ is related to that of the lensed image $\widetilde{f_L}$ via:
\begin{equation}
\label{f_O}
\widetilde{f_O}(\vec{k})=\widetilde{W}(\vec{k})\widetilde{f_L}(\vec{k})
\end{equation}
where $\widetilde{W}(\vec{k})$ is the Fourier transformation of the PSF. Without loss of generality, in the rest of our discussion, we use the isotropic Gaussian PSF, \ie, $\widetilde{W}(\vec{k})=\widetilde{W}_{\beta}(\vec{k})=\exp(-\beta^2\left\vert\vec{k}\right\vert^2/2)$. The advantage of working in Fourier space is that the PSF is included as a multiplicative factor, rather than a convolution as in real space.  Combining eq.(\ref{Fourier2}) and eq.(\ref{f_O}), we get:
\begin{equation}
\label{f_O2}
\widetilde{f_O}(\vec{k})=\widetilde{W}_{\beta}(\vec{k})\vert \mathrm{det} (\mathbf{A})\vert\widetilde{f_S}(\mathbf{A}\vec{k})
\end{equation}
Since the PSF profile in Fourier space typically falls off quickly when the wave number exceeds the inverse of the size of the PSF, it strongly suppresses the power of the observed images on small scales. Therefore, only a finite number of Fourier modes are available for providing shape information. Recovering information on arbitrarily small scales is never feasible in practice due to noise and numerical problems.  

To form CSEs, let us use the multipole moments of galaxy images to represent the shape information, which are defined as:
\begin{equation}
\label{defineM}
M_{ij}=\int d^2\vec{k}k_1^ik_2^j\widetilde{f_O}(\vec{k})
\end{equation}
where $i$ and $j$ are non-negative integers. Note that due to the finite degrees of freedom of the shape information, one can equivalently choose other basis (\eg, shapelets) to study the same issue without affecting the conclusion. For simplicity but without loss of generality, let us only consider galaxies that are invariant under the parity transformation $\vec{x}\to -\vec{x}$. Note that weak lensing does not change this property. For this type of galaxies, the imaginary part of $\widetilde{f_O}(\vec{k})$ is always zero, and $M_{ij}$'s are real. Furthermore, $M_{ij}$ is zero when $i+j$ is an odd number. In this case, the shear estimators $\Gamma_1$ and $\Gamma_2$ can be written as functions of the $M_{ij}$ with $i+j$ being even numbers only. For some of the lowest order $M_{ij}$'s, we can find out how they transform under lensing using eq.(\ref{f_O2}):
\begin{eqnarray}
\label{Mldl}
M_{ij}&=&\int d^2\vec{k}k_1^ik_2^j\widetilde{W}_{\beta}(\vec{k})\vert \mathrm{det} (\mathbf{A})\vert\widetilde{f_S}(\mathbf{A}\vec{k})\\ \nonumber
&=&\int d^2\vec{k}(\mathbf{A}^{-1}\vec{k})_1^i(\mathbf{A}^{-1}\vec{k})_2^j\widetilde{W}_{\beta}(\mathbf{A}^{-1}\vec{k})\widetilde{f_S}(\vec{k})
\end{eqnarray}
The last step in the above equation is achieved by redefining $\mathbf{A}\vec{k}$ as $\vec{k}$. By keeping the terms in matrix $\mathbf{A}$ up to the first order in shear, one can straightforwardly show the following:
\begin{eqnarray}
\label{Mij_trans}
&&M_{20}-M_{02}\\ \nonumber
&=&(1-2\kappa)(M_{20}^S-M_{02}^S)+\kappa\beta^2\left(M_{40}^S-M_{04}^S\right)\\ \nonumber
&+&2\gamma_2\beta^2\left(M_{31}^S-M_{13}^S\right)\\ \nonumber
&-&\gamma_1\left[2(M_{20}^S+M_{02}^S)-\beta^2\left(M_{40}^S+M_{04}^S-2M_{22}^S\right)\right]\\ \nonumber
\\ \nonumber
&&M_{11}\\ \nonumber
&=&(1-2\kappa)M_{11}^S-\gamma_2\left(M_{20}^S+M_{02}^S-2\beta^2 M_{22}^S\right)\\ \nonumber
&+&\kappa\beta^2\left(M_{31}^S+M_{13}^S\right)+\gamma_1\beta^2\left(M_{31}^S-M_{13}^S\right)\\ \nonumber
\\ \nonumber
&&M_{20}+M_{02}\\ \nonumber
&=&(1-2\kappa)(M_{20}^S+M_{02}^S)+\kappa\beta^2\left(M_{40}^S+2M_{22}^S+M_{04}^S\right)\\ \nonumber
&-&\gamma_1\left[2(M_{20}^S-M_{02}^S)-\beta^2\left(M_{40}^S-M_{04}^S\right)\right]\\ \nonumber
&-&\gamma_2\left[4M_{11}^S-2\beta^2\left(M_{31}^S+M_{13}^S\right)\right]
\end{eqnarray}
where 
\begin{equation}
\label{MS}
M_{ij}^S=\int d^2\vec{k}k_1^ik_2^j\widetilde{W}_{\beta}(\vec{k})\widetilde{f_S}(\vec{k})
\end{equation}

The conventional cosmic shear estimators ($\Gamma_1, \Gamma_2$) are functions of $M_{ij}$. According to our discussion in \S\ref{spin_2_esti}, the two functions have to satisfy the following relation:
\begin{equation}
\label{requirement_again}
\frac{\partial\Gamma_1}{\partial\gamma_1}+\frac{\partial\Gamma_2}{\partial\gamma_2}=2
\end{equation} 
In the presence of PSF, we are unable to find out whether one can find CSEs that satisfy the most general requirement given in eq.(\ref{requirement_again}). However, we can show that there do not exist CSEs satisfying a slightly stronger condition:
\begin{equation}
\label{requirement_stronger}
\frac{\partial\Gamma_1}{\partial\gamma_1}=\frac{\partial\Gamma_2}{\partial\gamma_2}=1
\end{equation} 
In addition to eq. (\ref{requirement_again}), eq.(\ref{requirement_stronger}) simply imposes another requirement that the spin-4 parts of the derivatives of the shear estimators with respect to the shears are zero.  
We can rewrite eq.(\ref{requirement_stronger}) for, \eg, $\Gamma_1$, using the chain rule as:
\begin{equation}
\label{Gamma_1_chain_rule}
1=\sum_{ij}\frac{\partial\Gamma_1}{\partial M_{ij}}\frac{\partial M_{ij}}{\partial \gamma_1}
\end{equation}
Since there are only a finite number of multipole moments available for constructing the shear estimators, we assume the maximum value of $i+j$ is $N$ (N is an even integer), \ie, we have:
 \begin{equation}
\label{Gamma_1_chain_rule2}
1=\sum_{i+j\le N}\frac{\partial\Gamma_1}{\partial M_{ij}}\frac{\partial M_{ij}}{\partial \gamma_1}
\end{equation}
Because the right side of eq.(\ref{Gamma_1_chain_rule2}) is evaluated at $\gamma_1=\gamma_2=\kappa=0$, both $\partial\Gamma_1/\partial M_{ij}$ and $\partial M_{ij}/\partial \gamma_1$ are functions of only $M_{ij}^S$'s, \ie, the multipole moments in the absence of lensing. 

As in eq.(\ref{Mij_trans}), one can show in general that $\partial M_{ij}/\partial \gamma_1$ involves higher order multipole moments, \ie,  $M_{i'j'}^S$'s with $i'+j'>i+j$. In particular, when $i+j=N$, $\partial M_{ij}/\partial \gamma_1$ depends linearly on $M_{i'j'}^S$'s with $i'+j'>N$. To satisfy the constraint in eq.(\ref{Gamma_1_chain_rule2}), the coefficients in front of the terms proportional to $M_{i'j'}^S$'s ($i'+j'>N$) must vanish because they are independent of the multipole moments with $i+j\le N$.  As a result, we find that $\partial\Gamma_1/\partial M_{ij}$ has to vanish when $i+j=N$. In other words, we have:
 \begin{equation}
\label{Gamma_1_chain_rule3}
1=\sum_{i+j\le N-2}\frac{\partial\Gamma_1}{\partial M_{ij}}\frac{\partial M_{ij}}{\partial \gamma_1}
\end{equation}
We can now recursively use the above reasoning to show that there does
not exist an $N$ that can satisfy
eq.(\ref{Gamma_1_chain_rule2}). Therefore, we can never find CSEs of the
type defined in eq.(\ref{requirement_stronger}). Though, one may still
expect to find CSEs based on the most general requirement defined in
eq.(\ref{requirement_again}). However, even in this case, CSEs {\it must} reduce to highly nonlinear forms for galaxies of pure elliptical shapes when the PSF effect is small, as we have shown in \S\ref{spin_2_esti}. This feature is already sufficient for arguing against their usefulness in practice.

\section*{Appendix D -- Derivation of the Multiplicative Biases Resulting From a Misuse of Z08}
\label{appendixD}

Let us now calculate the terms $\delta_1$ and $\delta_2$ defined in eq.(\ref{shear12PSF_wrong}). The averages of the spatial derivatives of the surface brightness field of a single galaxy can be related to the Fourier modes of the image. The Fourier transformation has been defined in eq.(\ref{Fourier}) in Appendix C. Following the notations of Appendix C, we find:
\begin{equation}
\label{RF_derive}
\partial_if_O(\vec{x})=\int\frac{d^2\vec{k}}{(2\pi)^2}(-\ima k_i)e^{-\ima\vec{k}\cdot\vec{x}}\widetilde{f_O}(\vec{k})
\end{equation}
and
\begin{eqnarray}
\label{RF_derivative_aves}
&&\left\langle\partial_if_O(\vec{x})\partial_jf_O(\vec{x})\right\rangle_g\\ \nonumber
&&=\frac{\int_Sd^2\vec{x}}{S}\partial_if_O(\vec{x})\partial_jf_O(\vec{x})\\ \nonumber
&&=\frac{\int_Sd^2\vec{x}}{S}\frac{\int d^2\vec{k}\int d^2\vec{k'}}{(2\pi)^4}(-k_ik'_j)e^{-\ima(\vec{k}+\vec{k'})\cdot\vec{x}}\widetilde{f_O}(\vec{k})\widetilde{f_O}(\vec{k'})\\ \nonumber
&&=\frac{1}{S}\frac{\int d^2\vec{k}\int d^2\vec{k'}}{(2\pi)^4}(-k_ik'_j)(2\pi)^2\delta_D^2(\vec{k}+\vec{k'})\widetilde{f_O}(\vec{k})\widetilde{f_O}(\vec{k'})\\ \nonumber
&&=\frac{1}{S}\frac{\int d^2\vec{k}}{(2\pi)^2}k_ik_j\left\vert\widetilde{f_O}(\vec{k})\right\vert^2
\end{eqnarray}
where $S$ is the total area of the map containing the galaxy. Similarly, one can derive the following relation:
\begin{eqnarray}
\label{RF_derive2}
\left\langle\vec{\nabla}f\cdot\vec{\nabla}(\nabla^2f_O)\right\rangle_g=-\frac{1}{S}\frac{\int d^2\vec{k}}{(2\pi)^2}\left\vert\vec{k}\right\vert^4\left\vert\widetilde{f_O}(\vec{k})\right\vert^2
\end{eqnarray} 
Eq.(\ref{RF_derivative_aves}) and eq.(\ref{RF_derive2}) allow us to transform eq.(\ref{shear12PSF_wrong}) into its version in Fourier space:
\begin{eqnarray}
\label{shearFourier}
&&\frac{1}{2}\left\langle\frac{P_{20}-P_{02}}{P_{20}+P_{02}-\beta^2D_4/2}\right\rangle_{en}=-\gamma_1(1-\delta_1)\\ \nonumber
&&\left\langle\frac{P_{11}}{P_{20}+P_{02}-\beta^2D_4/2}\right\rangle_{en}=-\gamma_2(1-\delta_2)
\end{eqnarray}
where 
\begin{eqnarray}
\label{defineP}
&&P_{ij}=\int d^2\vec{k}k_1^ik_2^j\left\vert\widetilde{f_O}(\vec{k})\right\vert^2\\ \nonumber
&&D_n=\int d^2\vec{k}\left\vert\vec{k}\right\vert^n\left\vert\widetilde{f_O}(\vec{k})\right\vert^2
\end{eqnarray}
Note that $D_4=P_{40}+2P_{22}+P_{04}$. It is now clear that the method of Z08 basically utilizes the quadrupole moments in the Fourier space to measure the cosmic shear. 

Using eq.(\ref{f_O2}), we can find out how $P_{ij}$ transform under lensing:
\begin{eqnarray}
\label{pldl}
P_{ij}&=&\int d^2\vec{k}k_1^ik_2^j\left\vert\widetilde{W}_{\beta}(\vec{k})\vert \mathrm{det} (\mathbf{A})\vert\widetilde{f_S}(\mathbf{A}\vec{k})\right\vert^2\\ \nonumber
&=&\vert\mathrm{det} (\mathbf{A})\vert\int d^2\vec{k}(\mathbf{A}^{-1}\vec{k})_1^i(\mathbf{A}^{-1}\vec{k})_2^j\\ \nonumber
&\times&\left\vert\widetilde{W}_{\beta}(\mathbf{A}^{-1}\vec{k})\widetilde{f_S}(\vec{k})\right\vert^2
\end{eqnarray}
The last step in the above equation is achieved by redefining $\mathbf{A}\vec{k}$ as $\vec{k}$. For the isotropic Gaussian PSF, $\widetilde{W}_{\beta}(\vec{k})=\exp(-\beta^2\left\vert\vec{k}\right\vert^2/2)$. By keeping the terms in matrix $\mathbf{A}$ up to the first order in shear, one can straightforwardly show the following:
\begin{eqnarray}
\label{Pij_trans}
&&P_{20}-P_{02}\\ \nonumber
&=&P_{20}^S-P_{02}^S+2\kappa\beta^2\left(P_{40}^S-P_{04}^S\right)+4\gamma_2\beta^2\left(P_{31}^S-P_{13}^S\right)\\ \nonumber
&-&2\gamma_1\left[P_{20}^S+P_{02}^S-\beta^2\left(P_{40}^S+P_{04}^S-2P_{22}^S\right)\right]\\ \nonumber
\\ \nonumber
&&P_{11}\\ \nonumber
&=&P_{11}^S-\gamma_2\left(P_{20}^S+P_{02}^S-4\beta^2 P_{22}^S\right)+2\kappa\beta^2\left(P_{31}^S+P_{13}^S\right)\\ \nonumber
&+&2\gamma_1\beta^2\left(P_{31}^S-P_{13}^S\right)\\ \nonumber
\\ \nonumber
&&P_{20}+P_{02}\\ \nonumber
&=&P_{20}^S+P_{02}^S-2\gamma_1\left[P_{20}^S-P_{02}^S-\beta^2\left(P_{40}^S-P_{04}^S\right)\right]\\ \nonumber
&-&4\gamma_2\left[P_{11}^S-\beta^2\left(P_{31}^S+P_{13}^S\right)\right]+2\kappa\beta^2D_4^S\\ \nonumber
\\ \nonumber
&&D_4\\ \nonumber
&=&D_4^S-2\kappa\left(D_4^S-\beta^2D_6^S\right)\\ \nonumber
&-&2\gamma_1\left[2\left(P_{40}^S-P_{04}^S\right)-\beta^2\left(P_{60}^S+P_{42}^S-P_{24}^S-P_{06}^S\right)\right]\\ \nonumber
&-&4\gamma_2\left[2\left(P_{31}^S+P_{13}^S\right)-\beta^2\left(P_{51}^S+2P_{33}^S+P_{15}^S\right)\right]\\ \nonumber
\end{eqnarray}
where 
\begin{eqnarray}
\label{PS}
&&P_{ij}^S=\int d^2\vec{k}k_1^ik_2^j\left\vert\widetilde{W}_{\beta}(\vec{k})\widetilde{f_S}(\vec{k})\right\vert^2\\ \nonumber
&&D_n^S=\int d^2\vec{k}\left\vert\vec{k}\right\vert^n\left\vert\widetilde{W}_{\beta}(\vec{k})\widetilde{f_S}(\vec{k})\right\vert^2
\end{eqnarray}

By keeping the terms up to the first order in shear, and using the fact that the ensemble averages are taken over statistically isotropic galaxy samples, one can now directly find expressions for $\delta_1$ and $\delta_2$ of eq.(\ref{shear12PSF_wrong}):
\begin{eqnarray}
\label{deltas}
\delta_1&=&\left\langle\frac{\left( P_{20}^S-P_{02}^S\right)^2}{\left( P_{20}^S+P_{02}^S-\beta^2D_4^S/2\right)^2}\right\rangle_{en}\\ \nonumber
&-&2\beta^2\left\langle\frac{\left( P_{20}^S-P_{02}^S\right)\left( P_{40}^S-P_{04}^S\right)}{\left( P_{20}^S+P_{02}^S-\beta^2D_4^S/2\right)^2}\right\rangle_{en}\\ \nonumber
&+&\frac{\beta^4}{2}\left\langle\frac{\left( P_{20}^S-P_{02}^S\right)\left( P_{60}^S+P_{42}^S-P_{24}^S-P_{06}^S\right)}{\left( P_{20}^S+P_{02}^S-\beta^2D_4^S/2\right)^2}\right\rangle_{en}\\ \nonumber
\\ \nonumber
\delta_2&=&\left\langle\frac{4\left( P_{11}^S\right)^2}{\left( P_{20}^S+P_{02}^S-\beta^2D_4^S/2\right)^2}\right\rangle_{en}\\ \nonumber
&-&8\beta^2\left\langle\frac{P_{11}^S\left( P_{31}^S+P_{13}^S\right)}{\left( P_{20}^S+P_{02}^S-\beta^2D_4^S/2\right)^2}\right\rangle_{en}\\ \nonumber
&+&2\beta^4\left\langle\frac{P_{11}^S\left( P_{51}^S+2P_{33}^S+P_{15}^S\right)}{\left( P_{20}^S+P_{02}^S-\beta^2D_4^S/2\right)^2}\right\rangle_{en}
\end{eqnarray}

\label{lastpage}

\end{document}